\tikzset{main node/.style={circle,fill=blue!20,draw,inner sep=1pt},}
\newtheorem{theorem}{Theorem}[section]
\newtheorem{definition}{Definition}
\newtheorem{lemma}{Lemma}
\newtheorem{corollary}[theorem]{Corollary}
\newtheorem{proposition}[theorem]{Proposition}
\renewcommand{\endproof}{\hfill $\blacksquare$}
\newcommand{\R}{\mathbb{R}}
\newcommand{\C}{\mathbb{C}}
\newcommand{\F}{\mathbb{F}}
\newcommand{\argmax}{\operatorname{argmax}}
\newcommand{\eig}{\operatorname{eig}}
\newcommand{\argmin}{\operatorname{argmin}}
\newcommand{\rd}{\mathrm{d}}
\title{Controllability Issues of Linear Ensemble Systems \\
over Multi-dimensional Parameterization Spaces}
\author{Xudong Chen\footnote{ECEE Department, CU Boulder. Email: \texttt{xudong.chen@colorado.edu}.}}
\begin{document}

\date{}
\maketitle 

\begin{abstract}
We address an open problem in ensemble control: Whether there exist controllable linear ensemble systems over multi-dimensional parameterization spaces?  
We provide a negative result: Any real-analytic linear ensemble system is not $\mathrm{L}^p$-controllable, for $2\le p \le \infty$, if its parameterization space contains an open set in $\R^d$ for $d \ge 2$.    
\end{abstract}

\section{Introduction and Main Result}
Ensemble control originated from quantum spin systems~\cite{glaser1998unitary,brockett2000stochastic,li2006control} and has found many applications across various disciplines in science and engineering, ranging from neuroscience~\cite{ching2013control,zlotnik2016phase,mardinly2018precise}, to emergent behaviors~\cite{brockett2010control}, and to multi-agent control~\cite{becker2012approximate,becker2017controlling,chen2019controllability}. 
Driven by these emerging applications, there has been an active development in mathematical control theory for analyzing basic properties of infinite ensemble systems, among which controllability has been a major focus.   
Although significant progress has been made over the last score, a complete understanding of controllability is still lacking. This is true even for ensembles of linear control systems.   
In the paper, we consider  ensembles of linear time-invariant systems whose $(A,B)$ pairs are continuous, matrix-valued functions defined on compact subsets of multi-dimensional Euclidean spaces. We call these subsets {\em parameterization spaces}. We address controllability issues of those linear ensemble systems.  

\subsection{Successes in one dimension} 
When parameterization spaces are one-dimensional, it is known that there exist uniformly controllable linear ensemble systems. 
 We take below a simple but illustrative example: Consider a scalar linear ensemble system over the closed unit interval $[0,1]$: 
\begin{equation}\label{eq:exampleone}
\dot x(t, \sigma)  := \frac{\partial}{\partial t} x(t,\sigma) = \sigma x(t, \sigma) + u(t), \quad \mbox{for all }\sigma\in [0,1], 
\end{equation} 
where $x(t, \sigma)\in \R$ is the current state of an individual system indexed by $\sigma$, and $u(t)\in \R$ is the  control input common to all individual systems. For a fixed time~$t$, the collective of $x(t,\sigma)$, for $\sigma\in [0,1]$, is called a {\em profile},  which we denote by $\chi(t)$. The profile $\chi(t)$ can be viewed as a function $\chi(t): \Sigma \to \R$, sending $\sigma$ to $x(t,\sigma)$. For this example, we assume that profiles are continuous functions. Then, uniform controllability of system~\eqref{eq:exampleone} is, roughly speaking, the capability of using the scalar control input $u(t)$ to steer from an arbitrary initial profile $\chi(0)$ to reach a profile which is $\mathrm{L}^\infty$-close to a given, but arbitrary, target profile $\hat \chi$.

In this setup, the controllable subspace associated with system~\eqref{eq:exampleone} is the uniform closure of the vector space (over $\R$) spanned by $a^kb$, for $k\ge 0$, where the associated $(a, b)$ pair is given by $a(\sigma) := \sigma$ and $b(\sigma) := 1$ for all $\sigma\in [0,1]$. 
System~\eqref{eq:exampleone} is uniformly controllable if and only if the controllable subspace comprises continuous functions from $[0,1]$ to $\R$ (see Lemma~\ref{lem:controllablesubspace} below).   
Here, $a^kb$ are simply the monomials~$\sigma^k$. By Stone-Weierstrass theorem~\cite[Ch.~7]{rudin1976principles}, any continuous function on $[0,1]$ can be approximated uniformly and arbitrarily well by polynomials. Thus, system~\eqref{eq:exampleone} is uniformly controllable.

  Significant extensions of the above controllability result have been made over the last decade. Necessary and/or sufficient conditions have been established for controllability of general linear ensemble systems over single closed intervals~\cite{li2010ensemble,qi2013ensemble,helmke2014uniform,li2015ensemble,zeng2016moment}, finite unions of closed intervals~\cite{schonlein2016controllability,li2020separating}, and curves in the complex plane~\cite{dirr2021uniform}.  Although the analysis for a general case is much more involved, Stone-Weierstrass theorem (or Mergelyan's theorem~\cite[Ch.~20]{rudin1987real} for dealing with {\em complex} linear ensemble systems) is the core as was illustrated above.  
  We also refer the reader to~\cite{li2010ensemble,li2011ensemble,dirruniform,zhang2018controllability} for ensembles of linear time-variant systems and, further, to a book chapter~\cite[Ch.~12]{fuhrmann2015mathematics} for more relevant works. 
  
  Note that any compact, connected, one-dimensional manifold is homeomorphic to either a closed interval or a circle~\cite[Ch.~2]{guillemin2010differential}. The literature is relatively sparse for linear ensemble systems over circles: It is known~\cite[Remark 9-(d)]{dirr2021uniform} that there exist scalar, complex linear ensemble systems with single control inputs that are $\mathrm{L}^2$-controllable (see Definition~\ref{def:controllability} below); the conditions about cyclic operators and cyclic vectors described in~\cite{ross2009common} can be used to establish the fact. A negative result about uniform controllability for those linear ensemble systems has been established in~\cite[Lemma 5]{dirr2021uniform}.    
  
  \subsection{Problem for multi-dimensions} 
  Those existing results make us wonder whether the successes can be repeated if the dimensions of the parameterization spaces are increased?    
  This is in fact an open problem. 
  
  Before we provide a solution to the problem, perhaps it is helpful to gain some insights by looking into a complex version of system~\eqref{eq:exampleone}. Consider a linear ensemble system with the same dynamics as~\eqref{eq:exampleone}, but with $\sigma$ being a complex variable that belongs to the closed unit disk centered at the origin of the complex plane:       
\begin{equation}\label{eq:exampletwo}
\dot x(t, \sigma) = \sigma x(t, \sigma) + u(t), \quad \mbox{for all } \sigma\in \C \mbox{ and } |\sigma| \le 1.
\end{equation} 
The state $x(t, \sigma)$ is now complex-valued. 
We allow the scalar control input $u(t)$ to take complex value as well. Note that we choose to work with complex systems is for ease of presentation: One can obtain a corresponding real ensemble system  by realification. The state space of each individual system after realification will be two-dimensional. We elaborate on the correspondence later in Lemma~\ref{lem:equivalentRandC}, Section~\S\ref{sec:preliminaryresult}.

The controllable subspace associated  with~\eqref{eq:exampletwo} is, similarly, given by the uniform closure of the space (but now, over $\C$) spanned by all the monomials $\sigma^k$ for $k\geq 0$. However, unlike the previous case, what we obtain after taking the closure is not the space of continuous functions anymore. It follows from Mergelyan's theorem that the controllable subspace comprises functions that are holomorphic in the interior of the closed disk and continuous on the boundary. As a consequence, we lose uniform controllability of system~\eqref{eq:exampletwo}. 
One may wonder at the point whether we could fix the controllability issue by increasing the dimension of state space and/or by adding more control inputs? The answer is no; in this paper, we show that if the parameterization space $\Sigma$ contains an open set $U$ in $\R^d$ for~$d \geq 2$, and if $A$ and $B$ are real-analytic at a certain point in $U$, then the linear ensemble system described by the $(A, B)$ pair cannot be uniformly or $\mathrm{L}^p$ controllable, for $p\geq 2$.   

To the best of the author's knowledge, the negative result is original. Previous works on the problem have mainly focussed on obtaining necessary conditions for controllability. For example, Helmke and Sch\"onlein have provided in~\cite{helmke2014uniform} conditions about disjointness of the spectrums of the $A$-matrix.  
Later in~\cite{schonlein2016controllability}, the authors have also shown that if uniform controllability is concerned, then under some other mild assumptions, the parameterization space is at most two-dimensional and, moreover, the $A$-matrix cannot have a branch of real eigenvalues. In a more recent work~\cite{dirr2021uniform}, Dirr and Sch\"onlein have shown that if there is only one single control input, then a linear ensemble system cannot be uniformly controllable if the dimension of the parameterization space is greater than one. Example and simulation studies for linear ensemble systems over two-dimensional parameterization spaces are also carried out by Zhang and Li in~\cite{zhang2018controllability}.  

\subsection{Main results}\label{ssec:mainresult} 
Let $\Sigma$ be a compact subset of $\R^d$. 
Let $\F$ be the field of either real or complex numbers. 
We consider a continuum ensemble of linear time-invariant control systems over $\Sigma$:
	\begin{equation}\label{eq:model}
	\dot x(t, \sigma) = A(\sigma) x(t, \sigma) + B(\sigma)u(t), \quad \mbox{for all } \sigma\in \Sigma,
	\end{equation}    
	where $x(t,\sigma)\in \F^n$, $u(t)\in \F^m$, and $A:\Sigma \to \F^{n\times n}$ and $B:\Sigma\to\F^{n\times m}$ are continuous matrix-valued functions. 
	The control input $u(t)$ is said to be {\em admissible} if for any $T>0$, $u:[0,T]\to \F^m$ is integrable.

Let $\chi(t): \Sigma \to \F^n$ be the profile at time $t$, defined as the function sending $\sigma$ to $x(t, \sigma)$. In this paper, the profiles $\chi(t)$ are either continuous or $\mathrm{L}^p$-functions, for $1\leq p < \infty$. 
Denote by $\mathrm{C}^0(\Sigma, \F^n)$ the space of continuous functions $f: \Sigma\to \F^n$, and by  
	${\rm L}^p(\Sigma, \F^n)$, for $1\leq p < \infty$, the Banach space of all functions $f: \Sigma\to \F^n$ whose ${\rm L}^p$-norm is finite. 
The {\em profile space} of system~\eqref{eq:model}, denoted by $\mathrm{X}^p_\F$, is given by
\begin{equation*}\label{eq:defXp}
\mathrm{X}^p_\F:= 
\left\{
\begin{array}{ll}
	\mathrm{L}^p(\Sigma, \F^n) & \mbox{if } 1\leq p < \infty, \\
	\mathrm{C}^0(\Sigma, \F^n) & \mbox{if } p = \infty.
\end{array}
\right.
\end{equation*}   
We now have the following definition:

\begin{definition}\label{def:controllability}
 	System~\eqref{eq:model}, or simply the pair $(A, B)$, is {\bf ${\rm L}^p$-controllable}\footnote{For $p=\infty$, ${\rm L}^\infty$-controllability is also known as {\em uniform controllability}~\cite{fuhrmann2015mathematics}.}, for $1\le p \le \infty$, if for any initial profile $\chi(0)\in \mathrm{X}^p_\F$, any target profile $\hat \chi\in \mathrm{X}^p_\F$, and any error tolerance $\epsilon > 0$, there is a time $T > 0$ and an admissible control input $u:[0,T]\to \F^m$ such that the solution $\chi(t)$ generated by~\eqref{eq:model} satisfies $\|\chi(T) - \hat \chi\|_{\rm L^p} < \epsilon$.    
 \end{definition}

Let $\sigma_0$ be a point of $\Sigma$.   	
A function~$f:\Sigma \to \R$ is said to be {\em real-analytic at~$\sigma_0$} if there exists an open neighborhood $U$ of $\sigma_0$ in $\R^d$ such that $f |_U$ can be represented by a convergent power series in the entries of~$(\sigma-\sigma_0)$. Note that if~$\sigma_0$ belongs to the boundary of $\Sigma$, then real-analyticity of~$f$ at $\sigma_0$ means that $f$ can be extended to an open neighborhood~$U$ of~$\sigma_0$ and the extended function is real-analytic at~$\sigma_0$. A complex-valued function $f:\Sigma\to \C$ is said to be {\em real-analytic at~$\sigma_0$} if both real and imaginary parts of~$f$ are real-analytic at~$\sigma_0$. A matrix-valued function is {\em real-analytic at~$\sigma_0$} if all of its entries are real-analytic at~$\sigma_0$.

We now state the main result of the paper:

\begin{theorem}\label{thm:main}  	 
   	If $\Sigma$ contains an open set $U$ in $\R^d$, with $d \geq 2$, and if continuous matrix-valued functions $A:\Sigma\to\F^{n\times n}$ and $B:\Sigma\to\F^{n\times m}$, with $\F = \R$ or $\F = \C$, are real-analytic at a certain point in~$U$, then the linear ensemble system $(A, B)$ cannot be $\mathrm{L}^p$-controllable, for $2\leq p \leq \infty$.  
\end{theorem}

Theorem~\ref{thm:main} can be formulated as a negative result in approximation theory. For that, we first have the following definition:

\begin{definition}
	Let $A:\Sigma\to \F^{n\times n}$ and $B:\Sigma \to \F^{n\times m}$ be continuous matrix-valued functions. 
	The {\bf $\mathrm{L}^p$-controllable subspace} of system~\eqref{eq:model}, denoted by $\mathcal{L}^p_\F(A, B)$, is the ${\rm L}^p$-closure of the subspace, over $\F$, spanned by the columns of $A^kB$, for all $k \ge 0$.   
\end{definition}

The above definition is a straightforward generalization of the controllable subspace associated with a finite dimensional linear system. By the Kalman rank condition, a finite-dimensional linear system is controllable if and only if the controllable subspace is the entire state space.  This is, in fact, true for linear ensemble systems.  
We introduce below a necessary and sufficient condition for ${\rm L}^p$-controllability  adapted from~\cite{triggiani1975controllability}:

\begin{lemma}\label{lem:controllablesubspace}
	System~\eqref{eq:model} is $\mathrm{L}^p$-controllable if and only if 
	$
	\mathcal{L}^p_\F(A, B)= \mathrm{X}^p_\F
	$.  
\end{lemma}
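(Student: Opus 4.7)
My plan is to recast the ensemble as an abstract linear control system on the Banach space $\mathrm{X}^p_\F$. Since $\Sigma$ is compact and $A, B$ are continuous, pointwise multiplication by $A$ defines a bounded operator $\mathcal{A}$ on $\mathrm{X}^p_\F$, and $v\mapsto B(\cdot) v$ defines a bounded operator $\mathcal{B}\colon \F^m\to \mathrm{X}^p_\F$. System~\eqref{eq:model} then reads $\dot x_\Sigma = \mathcal{A} x_\Sigma + \mathcal{B} u$, and because $\mathcal{A}$ is bounded, the series $e^{t\mathcal{A}} = \sum_{k\ge 0}\frac{t^k}{k!}\mathcal{A}^k$ converges in operator norm. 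The variation-of-constants formula gives
\[
x_\Sigma(T) = e^{T\mathcal{A}} x_\Sigma(0) + W_T(u), \qquad W_T(u) := \int_0^T e^{(T-s)\mathcal{A}} \mathcal{B} u(s)\,ds,
\]
so the problem reduces to comparing the reachable set $R_T := W_T\bigl(\mathrm{L}^1([0,T],\F^m)\bigr)$ with $\mathcal{L}^p_\F(A, B)$.

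The key intermediate claim I would aim for is the identity $\overline{R_T}^{\mathrm{L}^p} = \mathcal{L}^p_\F(A, B)$ for every fixed $T > 0$. Granted this, the lemma follows in two lines. For the forward direction, take $x_\Sigma(0) = 0$: $\mathrm{L}^p$-controllability makes every $\hat x\in \mathrm{X}^p_\F$ approximable by elements of $R_T$, so $\mathrm{X}^p_\F\subseteq \overline{R_T} = \mathcal{L}^p_\F(A, B)$. Conversely, given $x_\Sigma(0)$, $\hat x$, and $\epsilon > 0$, fix any $T > 0$; since $\hat x - e^{T\mathcal{A}} x_\Sigma(0)\in \mathrm{X}^p_\F = \mathcal{L}^p_\F(A, B) = \overline{R_T}$, some $W_T(u)$ approximates it within $\epsilon$, which is exactly $\mathrm{L}^p$-controllability.

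To establish the identity I would handle the two inclusions separately. For $R_T\subseteq \mathcal{L}^p_\F(A, B)$, I expand the exponential inside $W_T(u)$ to write $W_T(u) = \sum_{k\ge 0} \mathcal{A}^k \mathcal{B} c_k$ with $c_k = \frac{1}{k!}\int_0^T (T-s)^k u(s)\,ds \in \F^m$; the partial sums lie in the span of columns of $A^k B$ and converge in $\mathrm{L}^p$-norm, so the limit lies in the closed subspace $\mathcal{L}^p_\F(A, B)$. For the reverse inclusion I would invoke Hahn--Banach: any $\phi\in (\mathrm{X}^p_\F)^*$ annihilating $R_T$ satisfies $\int_0^T \langle \phi,\,e^{(T-s)\mathcal{A}}\mathcal{B} u(s)\rangle\,ds = 0$ for every $u$, so the continuous $\F^m$-valued function $s\mapsto \mathcal{B}^* e^{(T-s)\mathcal{A}^*}\phi$ vanishes identically on $[0, T]$; repeated differentiation at $s = T$ then yields $\mathcal{B}^*(\mathcal{A}^*)^k \phi = 0$ for all $k\ge 0$, so $\phi$ annihilates every $\mathcal{A}^k\mathcal{B} v$ and hence all of $\mathcal{L}^p_\F(A, B)$.

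The main delicate point is executing the duality step uniformly across the range $1\le p\le \infty$. For $1<p<\infty$ the dual is the usual $\mathrm{L}^{p'}$-space, but for $p = \infty$ the dual of $C^0(\Sigma, \F^n)$ is the space of vector-valued Borel measures on $\Sigma$, and one must verify that the pairing commutes with the Bochner integral in $s$ and that $s\mapsto e^{(T-s)\mathcal{A}^*}\phi$ is norm-continuous in the appropriate dual topology. These technicalities are handled by boundedness of $\mathcal{A}$, which makes its adjoint semigroup uniformly continuous; beyond that, the argument is a direct adaptation of the classical Kalman rank criterion in the spirit of~\cite{triggiani1975controllability}.
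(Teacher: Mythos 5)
Your argument is correct, and in fact the paper offers no proof of this lemma at all -- it is stated as ``adapted from'' Triggiani's work on controllability in Banach spaces -- so there is nothing internal to compare against. What you have written is a complete and self-contained version of exactly the argument that citation stands in for: view the ensemble as $\dot x_\Sigma = \mathcal{A}x_\Sigma + \mathcal{B}u$ with $\mathcal{A}$ a \emph{bounded} multiplication operator (this is where compactness of $\Sigma$ and continuity of $A$, $B$ enter), identify $\overline{R_T}$ with $\mathcal{L}^p_\F(A,B)$ via the norm-convergent expansion of $e^{(T-s)\mathcal{A}}$ in one direction and a Hahn--Banach/annihilator argument in the other, and then read off both implications. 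The duality step is handled correctly for all $p$, including $p=\infty$, precisely because boundedness of $\mathcal{A}$ makes $s\mapsto e^{(T-s)\mathcal{A}^*}$ norm-analytic, so the repeated differentiation at $s=T$ is legitimate.

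One cosmetic imprecision: in the forward direction you write as though a single $T$ serves all targets, whereas Definition~\ref{def:controllability} lets $T$ depend on $\hat x_\Sigma$ and $\epsilon$. This does not affect the conclusion, since each approximant lies in some $R_T\subseteq\mathcal{L}^p_\F(A,B)$ and the latter is closed by definition; but the sentence should be phrased as ``approximable by elements of $\bigcup_{T>0}R_T$.''
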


With Lemma~\ref{lem:controllablesubspace}, the following result is then equivalent to Theorem~\ref{thm:main}:

\begin{theorem}\label{thm:approximation}
If $\Sigma$ contains an open set $U$ in $\R^d$, with $d \geq 2$, and if continuous matrix-valued functions $A:\Sigma\to\F^{n\times n}$ and $B:\Sigma\to\F^{n\times m}$ are real-analytic at a certain point in~$U$, then $\mathcal{L}^p_\F(A, B)$ is a proper subspace of $\mathrm{X}^p_\F$, for $2\leq p\leq \infty$. 
\end{theorem}

Now, let $\Sigma$ be a compact subset of $\C$. Given a bounded function $a:\Sigma\to \C$, let $\mathcal{M}_a: \mathrm{L}^p(\Sigma, \C)\to \mathrm{L}^p(\Sigma, \C)$, for $1\leq p <\infty$, be the {\em multiplication operator}, defined as 
	$\mathcal{M}_a(b):= ab$. 
	The operator $\mathcal{M}_a$ is said to be {\em cyclic}~\cite{seid1974cyclic} if there exists an $\mathrm{L}^p$-function $b$ such that $\mathcal{L}_\C^p(a,b) = \mathrm{L}^p(\Sigma,\C)$, and any such $b$ is called a {\em cyclic function} with respect to $\mathcal{M}_a$. 
	Denote by $\omega: \Sigma \to \C$ the identity function, i.e., $\omega(\sigma) = \sigma$ for all $\sigma\in \Sigma$. 
	It is known~\cite{bram1955subnormal,seid1974cyclic} that $\mathcal{M}_\omega$ is a cyclic operator and, moreover, it is a canonical one in a sense that if $\mathcal{M}_{a}$ is another cyclic operator, then there exists a surjective isometry $J: \mathrm{L}^p(\Sigma,\C)\to \mathrm{L}^p(\Sigma,\C)$ such that $\mathcal{M}_{a} = J^{-1} \mathcal{M}_{\omega} J$.   
	While cyclic operators have been characterized and understood to a certain extent, it still remains open what type of elements $b\in \mathrm{L}^p(\Sigma,\C)$ can be cyclic functions.  
	A necessary and sufficient condition has recently been obtained in~\cite[Proposition 7]{dirr2021uniform}, yet there is still lack of an explicit and complete characterization.   
	Our contribution to this area is formulated in the following result, which is an immediate consequence of  Theorem~\ref{thm:approximation}: 

\begin{corollary}\label{cor:cyclicoperator}
	Let $\Sigma$ be a compact subset of $\C$ and $\omega:\Sigma\to \C$ be the identity function. 
	Suppose that $b\in \mathrm{L}^p(\Sigma, \C)$, for $2\leq p < \infty$, is a cyclic function with respect to $\mathcal{M}_\omega$; then, $b$ is nowhere real-analytic over the interior of $\Sigma$.  
\end{corollary}

\subsection{Organization of the paper}    	
The remainder of the paper is devoted to the proof of Theorem~\ref{thm:main}. The proof is divided into three parts:      	

In Section~\S\ref{sec:preliminaryresult}, we present preliminary results that can reduce moderately the complexity of controllability analysis for system~\eqref{eq:model}. 
By the end of Section~\S\ref{sec:preliminaryresult}, we will be able to focus only on $\mathrm{L}^2$-controllability of complex linear ensemble systems over closed, $d$-dimensional balls. 
     
In Section~\S\ref{sec:normalform}, we introduce a special class of (complex) linear ensemble systems, termed {\em normal forms}. Each normal form is a scalar complex linear ensemble system whose parameterization space is a closed two-dimensional disk. Moreover, the $A$-matrix, now being a scalar, is the identity function and the $B$-matrix, now being a row vector, is real-analytic. We show that every normal form is not $\mathrm{L}^2$-controllable. 
 
In Section~\S\ref{sec:proofofmain},	we bridge the gap between $\mathrm{L}^2$-controllability of normal forms and $\mathrm{L}^2$-controllability of general linear ensemble systems~\eqref{eq:model}. The analysis will be carried out by a sequence of reductions on both state spaces of individual systems  and parameterization spaces. 
After these reductions, we will be able to focus only on scalar complex linear ensemble systems over two-dimensional disks. These systems will be further translated into the normal forms with controllability preserved. 
All the arguments then form a complete proof of Theorem~\ref{thm:main}.

\subsection{Notations}  For a complex number $z = x + \mathrm{i} y$, let $\bar z = x - \mathrm{i} y$ be the complex conjugate of $z$. 
The polar representation of $z$ is given by $z = re^{\mathrm{i}\theta}$ where $r\ge 0$ and $\theta\in [-\pi,\pi)$. If $Z$ is a complex matrix, then $\overline{Z}$ is entry-wise, and we let $Z^\dagger := \overline{Z}^\top$.

Let $S$ be a subset of $\C^n$. 
A function $f: S\to \C^n$ is said to be $\mathrm{C}^k$, for $k\geq 0$, real-analytic, or holomorphic if it can be extended to a $\mathrm{C}^k$, real-analytic, or holomorphic function on an open set $S'$ that contains~$S$ (if $S$ is open, then $S'$ can simply be $S$).

Throughout the paper, we use $\omega_S: S\to S$ to denote the identity function, i.e., $\omega_S(x) = x$ for all $x\in S$.  
We let ${\bf 1}_S: S\to \R$ be the constant function that takes value one everywhere, i.e., ${\bf 1}_S(x) = 1$ for all $x\in S$. 
For ease of notation, we will omit sometimes the subindex $S$ and simply write $\omega$ and ${\bf 1}$.

Let $S$ be a Lebesgue measurable subset of $\R^n$. Let $f_1$ and $f_2$ be two complex, vector-valued, square-integrable functions defined on $S$. We define the inner-product of $f_1$ with $f_2$ as 
$
\langle f_1, f_2\rangle_{S}:= \int_{S} f^\dagger_1(\sigma)f_2(\sigma) \mathrm{d}\sigma
$.  
Note that $\langle f_1, f_2\rangle_{S} = \overline{\langle f_2, f_1\rangle}_{S}$. We will omit the subindex $S$ if it does not cause any confusion.

\section{Preliminary Results}\label{sec:preliminaryresult}
In this section, we will {\em (1)} establish equivalence of controllability for real and complex linear ensemble systems; {\em (2)} compare $\mathrm{L}^p$-controllability for different values of~$p$; and {\em (3)} introduce ensemble systems obtained by pullbacks and relate controllability properties of these systems to those of the original ones~\eqref{eq:model}. 
The results are formulated as Lemmas~\ref{lem:equivalentRandC}--\ref{lem:embedding} and presented in the subsequent subsections.

\subsection{Controllability of real and complex ensembles}  
As indicated at the beginning of Subsection~\S\ref{ssec:mainresult}, the field $\F$ can be either $\R$ or $\C$. When $\F = \C$ (resp. $\F = \R$), we call system~\eqref{eq:model} a {\em complex} (resp. {\em real}) linear ensemble system. Since $\R \subset \C$, the pair $(A, B)$ associated with a complex linear ensemble system can be real, matrix-valued functions (but the control input $u(t)$ can be valued in $\C^m$).  

We have the following result: 

\begin{lemma}\label{lem:equivalentRandC}
	There is a complex ${\rm L}^p$-controllable linear ensemble system if and only if there is a real ${\rm L}^p$-controllable one. 
\end{lemma}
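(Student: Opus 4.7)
The plan is to pass between real and complex linear ensemble systems by realification, and then translate controllability into a statement about controllable subspaces via Lemma~\ref{lem:controllablesubspace}.

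For the forward direction, I would start with a complex $\mathrm{L}^p$-controllable pair $(A,B)$ with $A \in \mathrm{C}^\omega(\Sigma,\C^{n\times n})$ and $B \in \mathrm{C}^\omega(\Sigma,\C^{n\times m})$. Writing $A = A_R + \mathrm{i} A_I$ and $B = B_R + \mathrm{i} B_I$, I form the realified pair
\[
A'(\sigma) := \begin{pmatrix} A_R(\sigma) & -A_I(\sigma) \\ A_I(\sigma) & A_R(\sigma) \end{pmatrix}, \qquad
B'(\sigma) := \begin{pmatrix} B_R(\sigma) & -B_I(\sigma) \\ B_I(\sigma) & B_R(\sigma) \end{pmatrix},
\]
which both lie in $\mathrm{C}^\omega(\Sigma,\R^{2n\times 2n})$ and $\mathrm{C}^\omega(\Sigma,\R^{2n\times 2m})$ respectively. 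The key observation is that the canonical $\R$-linear isomorphism $\iota \colon \C^n \to \R^{2n}$, $z_R + \mathrm{i} z_I \mapsto (z_R, z_I)^\top$, intertwines multiplication by any complex matrix with multiplication by its block-real form; equivalently, $\iota \circ (A^kB) = (A')^k B' \circ \iota$ for every $k\ge 0$. Since the $\R$-span of $\{v,\mathrm{i} v\}$ equals the $\C$-span of $\{v\}$, this identifies the $\R$-span of the columns of $\{(A')^kB'\}_{k\ge 0}$ with $\iota$ applied to the $\C$-span of the columns of $\{A^kB\}_{k\ge 0}$. Taking $\mathrm{L}^p$ closures and using Lemma~\ref{lem:controllablesubspace}, I would conclude $\mathcal{L}^p_\R(A',B') = \iota\bigl(\mathcal{L}^p_\C(A,B)\bigr) = \iota(\mathrm{X}^p_\C) = \mathrm{X}^p_\R$, so $(A',B')$ is real $\mathrm{L}^p$-controllable.

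For the reverse direction, given a real $\mathrm{L}^p$-controllable pair $(A,B)$, I would observe that the very same pair also defines a complex linear ensemble system since $\R \subset \C$. Because $A$ and $B$ have real entries, the columns of $A^kB$ are real-valued, so their $\C$-span equals their $\R$-span plus $\mathrm{i}$ times their $\R$-span; this decomposition is preserved under $\mathrm{L}^p$ closure. Hence
\[
\mathcal{L}^p_\C(A,B) \supseteq \mathcal{L}^p_\R(A,B) + \mathrm{i}\,\mathcal{L}^p_\R(A,B) = \mathrm{X}^p_\R + \mathrm{i}\,\mathrm{X}^p_\R = \mathrm{X}^p_\C,
\]
where the middle equality uses the hypothesis together with Lemma~\ref{lem:controllablesubspace}. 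A final application of the same lemma gives complex $\mathrm{L}^p$-controllability of $(A,B)$.

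The only delicate point — and the main bookkeeping obstacle — is verifying that $\iota$ and the decomposition $f \mapsto (\operatorname{re} f, \operatorname{im} f)$ behave correctly when passing to $\mathrm{L}^p$ closures. This reduces to checking that the pointwise identification $\C^n \cong \R^{2n}$ induces equivalent norms on $\mathrm{X}^p_\C$ and $\mathrm{X}^p_\R$ (an isometry for $p=2$, componentwise continuity for $p=\infty$, and an equivalence by the standard real/imaginary estimates for general $p$). Once this is in place, closures translate cleanly in both directions, and the equivalence of existence of controllable systems follows.
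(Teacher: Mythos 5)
Your proof is correct and uses essentially the same construction as the paper: the block realification $A' = [A_R, -A_I; A_I, A_R]$, $B' = [B_R, -B_I; B_I, B_R]$. The only cosmetic differences are that you verify controllability through the subspace criterion of Lemma~\ref{lem:controllablesubspace} (tracking $\iota(\mathcal{L}^p_\C(A,B)) = \mathcal{L}^p_\R(A',B')$) rather than directly from Definition~\ref{def:controllability} as the paper does, and that you spell out the real-to-complex direction, which the paper simply cites from the literature.
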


\begin{proof}
	If system~\eqref{eq:model} is real and ${\rm L}^p$-controllable, then it is known (see, e.g.,~\cite[Lemma~1]{dirr2021uniform} and~\cite{fattorini1966some}) that the same pair $(A, B)$ yields a complex, ${\rm L}^p$-controllable linear ensemble system.
		We now assume that system~\eqref{eq:model} is complex and ${\rm L}^p$-controllable. We show below that its realification is ${\rm L}^p$-controllable. 
	First, decompose $A = A_1 + \mathrm{i} A_2$ and $B = B_1 + \mathrm{i} B_2$ into real and imaginary parts. The realification of~\eqref{eq:model} is then a $2n$-dimensional real linear ensemble system given as follows: 
\begin{equation}\label{eq:correspondingrealsys}
\begin{bmatrix}
\dot x_{1}(t,\sigma) \\
\dot x_{2}(t,\sigma)
\end{bmatrix}
= 
\begin{bmatrix}
	A_1(\sigma) & -A_2(\sigma)\\
	A_2(\sigma) & A_1(\sigma)
\end{bmatrix}
\begin{bmatrix}
x_{1}(t,\sigma) \\
x_{2}(t,\sigma)
\end{bmatrix}
 + 
\begin{bmatrix}
	B_1(\sigma) & -B_2(\sigma) \\
	B_2(\sigma) & B_1(\sigma)
\end{bmatrix} 
\begin{bmatrix}
	u_1(t) \\
	u_2(t)
\end{bmatrix}.
\end{equation}
The correspondence between~\eqref{eq:model} and~\eqref{eq:correspondingrealsys} is  straightforward: 
The two $n$-dimensional substates $x_{1}(t,\sigma)$ and $x_{2}(t,\sigma)$ in~\eqref{eq:correspondingrealsys} correspond to the real and imaginary parts, respectively, of $x(t,\sigma)$ in~\eqref{eq:model}. The same holds for $u_1(t)$ and $u_2(t)$, i.e., they are real- and imaginary-parts of $u(t)$ in~\eqref{eq:model}.  
We conclude from Definition~\ref{def:controllability} that if the linear complex ensemble system~\eqref{eq:model} is $\mathrm{L}^p$-controllable, then so is its realification~\eqref{eq:correspondingrealsys}.  
 \end{proof}

In the sequel, we will let $\F = \C$, i.e., we will consider {\em complex} linear ensemble systems. 
The choice is made for ease of analysis.  
For ease of notation, we will simply write $\mathcal{L}^p(A, B)$ by omitting its subindex $\C$. 

\subsection{Comparison between different notions of controllability}  
We have the following result that compares ${\rm L}^p$-controllability for different values of~$p$: 

\begin{lemma}\label{lem:l2controllability}
	If system~\eqref{eq:model} is $\mathrm{L}^p$-controllable and if $1\leq q < p \leq \infty$, then the system is also $\mathrm{L}^q$-controllable.  
\end{lemma}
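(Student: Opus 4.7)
The plan is to route everything through Lemma~\ref{lem:controllablesubspace}, which recasts $\mathrm{L}^p$-controllability as the equality $\mathcal{L}^p(A,B) = \mathrm{X}^p_\F$. So it suffices to show that whenever $\mathcal{L}^p(A,B) = \mathrm{X}^p_\F$, one also has $\mathcal{L}^q(A,B) = \mathrm{X}^q_\F$ for every $1 \le q < p$.

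Two standard facts about Lebesgue spaces over the finite-measure set $\Sigma$ will carry the argument. First, since $\Sigma$ is compact and hence of finite Lebesgue measure, H\"older's inequality yields a continuous embedding $\mathrm{X}^p_\F \hookrightarrow \mathrm{X}^q_\F$; that is, there is a constant $C>0$ depending only on $\Sigma$, $q$, $p$ such that $\|f\|_{\mathrm{L}^q}\le C\,\|f\|_{\mathrm{L}^p}$ for all $f \in \mathrm{X}^p_\F$, where in the case $p=\infty$ the right-hand side is the sup norm on $\mathrm{C}^0(\Sigma,\F^n)$ and $C = (\operatorname{vol}\Sigma)^{1/q}$. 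Second, $\mathrm{X}^p_\F$ is dense in $\mathrm{X}^q_\F$ with respect to the $\mathrm{L}^q$-norm: for $p=\infty$ this is the standard density of continuous functions in $\mathrm{L}^q$ on a compact set, while for $p<\infty$ one can approximate through simple or continuous functions that lie in both spaces.

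With these in hand, the rest is a triangle-inequality argument. Fix $f \in \mathrm{X}^q_\F$ and $\epsilon > 0$. By density, pick $g \in \mathrm{X}^p_\F$ with $\|f - g\|_{\mathrm{L}^q} < \epsilon/2$. By the hypothesis $\mathcal{L}^p(A,B) = \mathrm{X}^p_\F$, pick an element $h$ of the algebraic span over $\F$ of the columns of $A^k B$, $k \ge 0$, with $\|g - h\|_{\mathrm{L}^p} < \epsilon/(2C)$. The embedding estimate then gives $\|g - h\|_{\mathrm{L}^q} < \epsilon/2$, and combining yields $\|f - h\|_{\mathrm{L}^q} < \epsilon$. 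Since $f$ and $\epsilon$ were arbitrary, $\mathcal{L}^q(A,B) = \mathrm{X}^q_\F$, and Lemma~\ref{lem:controllablesubspace} then delivers $\mathrm{L}^q$-controllability.

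There is no genuine obstacle here; the only mild point to be careful about is the asymmetry between the case $p=\infty$, where $\mathrm{X}^\infty_\F$ is the space of continuous functions under the sup norm, and the case $p<\infty$. Both are handled in a single stroke by observing that continuous functions on the compact set $\Sigma$ sit inside every $\mathrm{L}^p$ and are $\mathrm{L}^q$-dense, and that the sup norm dominates the $\mathrm{L}^q$-norm up to the multiplicative factor $(\operatorname{vol}\Sigma)^{1/q}$.
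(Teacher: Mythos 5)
Your proposal is correct and follows essentially the same route as the paper: both reduce to Lemma~\ref{lem:controllablesubspace}, use H\"older's inequality on the finite-measure set $\Sigma$ to embed $\mathrm{X}^p_\F$ continuously into $\mathrm{X}^q_\F$, and conclude by density of a common subspace in the $\mathrm{L}^q$-norm (the paper phrases this via $\mathrm{C}^0(\Sigma,\F^n)\subseteq\mathcal{L}^q(A,B)$ plus closedness of $\mathcal{L}^q(A,B)$, while you make the same point with an explicit triangle-inequality approximation). The differences are purely presentational.
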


\begin{proof}
	First, note that $\mathrm{L}^p(\Sigma, \C^n)$ is a subset of $\mathrm{L}^q(\Sigma, \C^n)$; indeed, by the H\"older's inequality, we have that $\|f\|_{\mathrm{L}^q} \leq \|f\|_{\mathrm{L}^p} \operatorname{vol}(\Sigma)^{\frac{1}{q} -\frac{1}{p}}$ for any $f\in \mathrm{L}^p(\Sigma, \C^n)$, where $\operatorname{vol}(\Sigma)$ is the volume of $\Sigma$. 
	It follows that $\|f\|_{\mathrm{L}^q}$ is finite and, hence, $f\in \mathrm{L}^q(\Sigma, \C^n)$. By the same argument, we know that $\mathcal{L}^q(A, B)$ contains $\mathcal{L}^p(A, B)$ as a subset.  
Because system~\eqref{eq:model} is ${\rm L}^p$-controllable, by Lemma~\ref{lem:controllablesubspace}, $\mathcal{L}^p(A, B)$ (and, hence, $\mathcal{L}^q(A, B)$) contains $\mathrm{C}^0(\Sigma, \C^n)$ as a subset. Since $\Sigma$ is compact, $\mathrm{C}^0(\Sigma, \C^n)$ is dense in $\mathrm{L}^q(\Sigma, \C^n)$ with respect to the $\mathrm{L}^q$-norm. Finally, note that $\mathcal{L}^q(A, B)$ is closed, so $\mathcal{L}^q(A, B) = \mathrm{L}^q(\Sigma, \C^n)$. By Lemma~\ref{lem:controllablesubspace}, system~\eqref{eq:model} is $\mathrm{L}^q$-controllable. 
\end{proof}

By Lemma~\ref{lem:l2controllability}, if system~\eqref{eq:model} is not $\mathrm{L}^2$-controllable, then it cannot be $\mathrm{L}^p$-controllable for all $p\ge 2$. Thus, to prove Theorem~\ref{thm:main}, it suffices to prove for the case where $p = 2$. 
Because of this, we assume, in the sequel, that $p = 2$.  
For ease of notation, we will write 
$\mathcal{L}(A, B):= \mathcal{L}^2(A, B)$
by omitting the sup-index. We will also omit, on occasions, the prefix ``$\mathrm{L}^2$-'' for controllability. For example, we will write controllable subspace instead of $\mathrm{L}^2$-controllable subspace.  

\subsection{Pullbacks by embeddings and subensembles} 
In this subsection, we assume that $\Sigma$ contains an open set $U$ in $\R^d$. 
Let $\Sigma'$ be a closed, $d$-dimensional ball (or a rectangle) in~$\R^d$, and $\varphi: \Sigma' \to U$ be a $\mathrm{C}^1$-embedding. 
Let $A':\Sigma'\to \C^{n\times n}$ and $B':\Sigma' \to \C^{n\times m}$ be defined as $A':= A\cdot \varphi$ and $B':= B\cdot \varphi$. 
We consider the following ensemble system:  
\begin{equation}\label{eq:embedding}
\dot x'(t, \sigma') = A'(\sigma') x'(t, \sigma') + B'(\sigma') u'(t), \quad \mbox{for all } \sigma'\in \Sigma',
\end{equation}
and have the following definition:

\begin{definition}\label{def:pullback}
System~\eqref{eq:embedding} is the {\bf pullback} of system~\eqref{eq:model} by~$\varphi$.  In the case	$\varphi:\Sigma'\to U$ is an inclusion map  we call system~\eqref{eq:embedding} a {\bf subensemble} or, more explicitly, {\bf subensemble-${\bf \Sigma'}$} of system~\eqref{eq:model}. 
\end{definition}

The following result relates controllability of system~\eqref{eq:model} to controllability of its pullback~\eqref{eq:embedding} (a similar result is obtained in~\cite[Lemma~1]{dirr2021uniform} for $\varphi$ an inclusion map):   
   	
\begin{lemma}\label{lem:embedding}
   	If system~\eqref{eq:embedding} is not controllable, then neither is system~\eqref{eq:model}. 	
\end{lemma}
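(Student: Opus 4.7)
The plan is to prove the contrapositive: assume system~\eqref{eq:model} is $\mathrm{L}^2$-controllable, and deduce that its pullback~\eqref{eq:embedding} is also $\mathrm{L}^2$-controllable. By Lemma~\ref{lem:controllablesubspace}, this amounts to showing $\mathcal{L}(A',B') = \mathrm{L}^2(\Sigma',\C^n)$. The key algebraic observation is that $(A')^k B' = (A^k B)\circ \rho$ for every $k\ge 0$, so if $\rho^*:f\mapsto f\circ \rho$ denotes the ``composition with $\rho$'' map, then $\rho^*$ sends each column of $A^kB$ to the corresponding column of $(A')^k B'$. Consequently $\rho^*$ carries the $\C$-linear span $\mathcal{V}$ of the columns of $\{A^kB\}_{k\ge 0}$ into the analogous span $\mathcal{V}'$, whose $\mathrm{L}^2(\Sigma')$-closure is by definition $\mathcal{L}(A',B')$.

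The next step is to show that $\rho^*$ is a bounded operator from $\mathrm{L}^2(\rho(\Sigma'),\C^n)$ into $\mathrm{L}^2(\Sigma',\C^n)$. Here I use that $\Sigma'$ and $U$ both live in $\R^d$, so the derivative $D\rho$ is a square matrix; since $\rho$ is a $\mathrm{C}^1$-embedding, $D\rho$ is injective and hence invertible pointwise on $\Sigma'$, and by compactness of $\Sigma'$ and continuity of $\det D\rho$ there exists a constant $c>0$ such that $|\det D\rho(\sigma')|\ge c$ for all $\sigma'\in \Sigma'$. The change-of-variables formula then gives
\begin{equation*}
\|f\circ\rho\|_{\mathrm{L}^2(\Sigma')}^2 = \int_{\rho(\Sigma')} |f(\sigma)|^2\, |\det D\rho(\rho^{-1}(\sigma))|^{-1}\, \mathrm{d}\sigma \;\le\; c^{-1}\|f\|_{\mathrm{L}^2(\rho(\Sigma'))}^2
\end{equation*}
for every $f\in \mathrm{L}^2(\rho(\Sigma'),\C^n)$, where $\rho^{-1}$ is well-defined on $\rho(\Sigma')$ because an embedding is a homeomorphism onto its image.

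To finish, fix any target $g\in \mathrm{L}^2(\Sigma',\C^n)$ and extend it to $\Sigma$ by setting $f := g\circ \rho^{-1}$ on the compact set $\rho(\Sigma')\subset\Sigma$ and $f := 0$ elsewhere; then $f\in \mathrm{L}^2(\Sigma,\C^n)$ and $\rho^* f = g$. By $\mathrm{L}^2$-controllability of $(A,B)$ together with Lemma~\ref{lem:controllablesubspace}, there exist $h_j\in \mathcal{V}$ with $h_j\to f$ in $\mathrm{L}^2(\Sigma,\C^n)$, and the continuity bound above then yields $\rho^* h_j\to g$ in $\mathrm{L}^2(\Sigma',\C^n)$. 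Since each $\rho^* h_j\in\mathcal{V}'\subset\mathcal{L}(A',B')$ and $\mathcal{L}(A',B')$ is closed, we conclude $g\in\mathcal{L}(A',B')$, completing the argument. I expect the only real technical point to be the uniform lower bound on $|\det D\rho|$, which rests on combining the pointwise invertibility guaranteed by the embedding hypothesis (critically using $\dim\Sigma' = \dim U$) with compactness of $\Sigma'$; everything else is a routine consequence of the identity $(A')^kB' = (A^kB)\circ\rho$, the change-of-variables formula, and Lemma~\ref{lem:controllablesubspace}.
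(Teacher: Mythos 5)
Your proposal is correct, and it runs the argument in the logically dual direction from the paper. The paper argues directly: it takes a function $f'$ at distance at least $\epsilon'$ from $\mathcal{L}(A',B')$, pushes it forward to a function $f$ on $\Sigma$ (zero off $\rho(\Sigma')$), and uses the lower Jacobian bound $\kappa_0 \le |\det D\rho|$ to show $f$ stays at distance $\sqrt{\kappa_0}\,\epsilon'$ from $\mathcal{L}(A,B)$, so the original system fails the criterion of Lemma~\ref{lem:controllablesubspace}. You instead prove the contrapositive by exhibiting the pullback $\rho^*$ as a bounded operator and transporting \emph{density} backward: approximate the extension $f$ of a target $g$ by elements of the span of the columns of $A^kB$, then pull the approximants back. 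The technical core is identical in both versions --- the identity $(A')^kB' = (A^kB)\circ\rho$, the change-of-variables formula, and the two-sided compactness bound on $|\det D\rho|$ (valid precisely because $\dim\Sigma'=\dim U=d$, as you note). Your route has the small advantage of making explicit the inclusion $\rho^*\mathcal{L}(A,B)\subseteq\mathcal{L}(A',B')$, which the paper asserts without proof when it writes ``it should be clear that $g'\in\mathcal{L}(A',B')$''; that inclusion genuinely requires the boundedness of $\rho^*$ to pass from the algebraic span to its closure. The one point you leave implicit is that $f:=g\circ\rho^{-1}$ extended by zero lies in $\mathrm{L}^2(\Sigma,\C^n)$: this needs the matching \emph{upper} bound $|\det D\rho|\le\kappa_1$ (again from continuity and compactness), exactly as in the paper's estimate $\|f\|_{\mathrm{L}^2}\le\kappa_1\|f'\|_{\mathrm{L}^2}$; it is a one-line fix and does not affect the validity of your argument.
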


\begin{proof}
Assuming that system~\eqref{eq:embedding} is not controllable, 
we will show that there exist a function $f\in \mathrm{L}^2(\Sigma, \C^n)$ and an $\epsilon > 0$ such that $f$ is at least $\epsilon$-away from $\mathcal{L}(A, B)$.

For any given $\sigma'\in \Sigma'$, we let $\rd\varphi_{\sigma'}: \R^d \to \R^d$ be the derivative of $\varphi$ at $\sigma'$. 
Because~$\varphi$ is an embedding, $\rd\varphi_{\sigma'}$ is a linear isomorphism. Thus, $\det(\rd\varphi_{\sigma'})$ is nonzero. Since $\varphi$ is $\mathrm{C}^1$ and since $\Sigma'$ is compact, there exist positive numbers $\kappa_0$ and $\kappa_1$ such that  
$\kappa_0 \leq |\det(\rd\varphi_{\sigma'})| \leq \kappa_1$, for all $\sigma'\in \Sigma'$. 

Since system~\eqref{eq:embedding} is not controllable, by Lemma~\ref{lem:controllablesubspace},  $\mathcal{L}(A', B')$ is a proper subspace of $\mathrm{L}^2(\Sigma', \C^n)$. Thus, there exist  a function $f'\in \mathrm{L}^2(\Sigma', \C^n)$ and an $\epsilon' > 0$ such that $f'$ is at least $\epsilon'$-away from $\mathcal{L}(A', B')$. 
Now, let $f: \Sigma \to \C^n$ be defined as follows:
$$
f(\sigma) := 
\left\{
\begin{array}{ll}
	f'(\sigma') & \mbox{if $\sigma = \varphi(\sigma')$ for some $\sigma'\in \Sigma'$},\\
	0 & \mbox{otherwise}.
\end{array}
\right.
$$ 
It follows from computation that $\|f\|_{\mathrm{L}^2} \leq \kappa_1 \|f'\|_{\mathrm{L}^2}$, so $f\in \mathrm{L}^2(\Sigma,\C^n)$.  

Given an arbitrary $g$ in $\mathcal{L}(A, B)$, let $g':\Sigma' \to \C^n$ be defined as $g'(\sigma') := g(\varphi(\sigma'))$. It should be clear that $g'\in \mathcal{L}(A',B')$. 
Moreover, we have that
\begin{align*}
\|g - f\|^2_{\mathrm{L}^2} \geq \| (g - f) |_{\varphi(\Sigma')}\|^2_{\mathrm{L}^2} \geq \kappa_0 \|g' - f'\|^2_{\mathrm{L}^2}  \geq \kappa_0 {\epsilon'}^2. 
\end{align*}
Thus, $f$ is at least $\sqrt{\kappa_0}\epsilon'$-away from $\mathcal{L}(A, B)$, which implies that $\mathcal{L}(A, B)$ is a proper subspace of $\mathrm{L}^2(\Sigma, \C^n)$. Thus, by Lemma~\ref{lem:controllablesubspace}, $(A, B)$ is not controllable. 
\end{proof}

If $A$ and $B$ are real-analytic at a certain point $\sigma_0\in U$, then they are real-analytic over an open neighborhood of $\sigma_0$, and any such open neighborhood contains a closed $d$-dimensional ball.  

Thanks to Lemma~\ref{lem:embedding}, we can now focus on the case where $\Sigma$ is itself a closed $d$-dimensional ball  and, moreover, $A:\Sigma\to\C^{n\times n}$ and $B:\Sigma\to\C^{n\times m}$ are real-analytic functions. 
However, even for such simplified case, the proof of Theorem~\ref{thm:main} is nontrivial at all.

\section{Normal Forms}\label{sec:normalform}
 In this section, we focus on a special class of complex linear ensemble systems, which we term {\em normal forms}. 
Each normal form is a scalar ensemble system, and its parameterization space is a closed, two dimensional disk in $\R^2$.  
In the sequel, we identify $\R^2$ with the complex plane~$\C$, so a point $\sigma = (\sigma_1, \sigma_2) \in \R^2$ corresponds to a complex number $\sigma = \sigma_1 + \mathrm{i}\sigma_2$.  
Define a disk of radius $R$ as follows: 
 $$
 D_0[R]:= \left \{ \sigma \in \C \mid |\sigma | \le R \right \}. 
 $$  
The square bracket in $D_0[R]$ indicates that it is a closed disk and the subindex $0$ indicates that the disk is centered at~$0$. We now have the following definition:

\begin{definition}
A {\bf normal form} is a scalar, complex linear ensemble system:
\begin{equation}\label{eq:normalform}
\dot x(t, \sigma) = \sigma x(t, \sigma) + b(\sigma) u(t), \quad \mbox{for all } \sigma \in D_0[R],
\end{equation}	
where $b:\Sigma\to\C^{1\times m}$ is an arbitrary real-analytic, vector-valued function. 
\end{definition}

The goal of the section is to establish the following result: 
  
\begin{theorem}\label{thm:specialcomplex}
	 Every normal form~\eqref{eq:normalform} is not $\mathrm{L}^2$-controllable.    
\end{theorem}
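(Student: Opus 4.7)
The plan is to prove $\mathcal{L}(A,b) \neq \mathrm{L}^2(D_0[R], \C)$ by Hilbert-space duality, namely by producing a nonzero $\phi \in \mathrm{L}^2(D_0[R], \C)$ with $\langle \phi, \sigma^k b_j\rangle = 0$ for every $k \ge 0$ and $j \in \{1,\ldots,m\}$. As a preliminary reduction, I would first note that $b \not\equiv 0$ together with real-analyticity of its components forces some $b_{j_0}$ to be nonvanishing on a closed sub-disk $D' \subset D_0[R]$; by Lemma~\ref{lem:embedding} it suffices to handle the subensemble on $D'$. On $D'$, a gauge transformation $x \mapsto x/b_{j_0}$, which is a topological isomorphism of $\mathrm{L}^2(D',\C)$ because $|b_{j_0}|$ is bounded above and below there, normalizes the $j_0$-th component of $b$ to $1$ while preserving both $A(\sigma) = \sigma$ and $\mathrm{L}^2$-controllability.

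With this reduction, the scalar case $m=1$ is immediate: the closed $\mathrm{L}^2(D')$-span of $\{\sigma^k\}_{k \ge 0}$ equals the Bergman space on $D'$, which is a proper closed subspace (for example, $\bar\sigma$ lies in its orthogonal complement).

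For $m > 1$ I would work in the angular Fourier decomposition $\mathrm{L}^2(D',\C) = \bigoplus_{n \in \mathbb{Z}} H_n$, where $H_n = \{g(r)e^{in\theta} : g \in \mathrm{L}^2([0,R'], r\,dr)\}$, and look for $\phi$ supported in a single mode $H_{-N}$ with $N$ large. Each $b_j$ expands as $b_j(re^{i\theta}) = \sum_\ell b_{j,\ell}(r)e^{i\ell\theta}$ with $b_{j,\ell}(r) = r^{|\ell|}\tilde b_{j,\ell}(r^2)$, and real-analyticity---via a holomorphic extension of $b_j$ in independent variables $(\sigma,\bar\sigma)$ on a polydisk of radius $\rho > R'$---implies $|\tilde b_{j,\ell}(s)| \le C\rho^{-|\ell|}$ on $[0,R'^2]$. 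The projection of $\sigma^k b_j$ onto $H_{-N}$ is $r^{2k+N}\tilde b_{j,-N-k}(r^2)e^{-iN\theta}$; taking the ansatz $\phi = r^N\eta(r^2)e^{-iN\theta}$ and substituting $s = r^2$, the orthogonality conditions reduce to the infinite moment system $\int_0^{R'^2} \eta(s)\, s^M \overline{\tilde b_{j,-M}(s)}\, ds = 0$ for all $M \ge N$ and all $j$, to be solved with $\eta \in \mathrm{L}^2([0,R'^2], s^N\, ds)$ nonzero.

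The main obstacle is this last existence step. By duality between $\mathrm{L}^2([0,R'^2], s^N\, ds)$ and $\mathrm{L}^2([0,R'^2], s^{-N}\, ds)$, finding $\eta$ is equivalent to showing that the $m$ countable families $\{s^M \overline{\tilde b_{j,-M}(s)}\}_{M \ge N}$ do not densely span the dual weighted space. I expect this to follow from a Hankel-matrix/moment-sequence analysis: all $\tilde b_{j,-M}$ arise from the single bivariate Taylor series of $b_j$, so the moment array has a rigid algebraic structure (amenable to a rank or determinantal bound), and together with the geometric decay of norms this should preclude density for sufficiently large $N$. Should that route prove too delicate, a $\bar\partial$-theoretic alternative is available: elements of $b_j \cdot (\text{Bergman space})$ satisfy Beltrami-type equations whose $\mathrm{L}^2$-limits, by elliptic regularity, remain in a proper $\bar\partial$-structured subspace of $\mathrm{L}^2(D',\C)$.
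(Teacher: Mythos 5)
Your reduction of the orthogonality conditions to the moment system $\int_0^{R'^2}\eta(s)\,s^M\,\overline{\tilde b_{j,-M}(s)}\,\mathrm{d}s=0$ ($M\ge N$, all $j$) is carried out correctly, but the step you yourself flag as the main obstacle is not just unproved --- the single-angular-mode ansatz $\phi=r^N\eta(r^2)e^{-\mathrm{i}N\theta}$ provably fails for some normal forms, so no Hankel/moment argument can rescue it. Take $m=2$ and $b=(1,e^{\bar\sigma})$; here $e^{\bar\sigma}=\sum_{\ell\ge0}r^\ell e^{-\mathrm{i}\ell\theta}/\ell!$, so $\tilde b_{2,-M}(s)=1/M!$ is a nonzero constant for every $M\ge0$, and your conditions contain $\int_0^{R'^2}\eta(s)\,s^M\,\mathrm{d}s=0$ for all $M\ge N$. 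Writing $h(s):=\eta(s)s^{N/2}\in\mathrm{L}^2(\mathrm{d}s)$, these say $h\perp s^{M-N/2}$ for all $M\ge N$; the exponents $N/2,\,N/2+1,\ldots$ have divergent reciprocal sum, so by the $\mathrm{L}^2$ M\"untz--Sz\'asz theorem their span is dense and $h=0$, hence $\eta=0$, for every $N$ (the positive modes fail the same way). This also shows why geometric decay of $\|s^M\tilde b_{j,-M}\|_{\mathrm{L}^\infty}$ cannot by itself preclude density: a family $\epsilon_M e_M$ with $\{e_M\}$ orthonormal and $\epsilon_M\to0$ still spans densely. The $\bar\partial$ ``alternative'' is also not viable as stated, since elements of $\mathcal{L}(\id,b)$ are $\mathrm{L}^2$-limits of \emph{sums} $\sum_j b_jh_j$ with $h_j$ polynomials, and such sums satisfy no common Beltrami-type equation. (Two smaller points: the nonvanishing sub-disk $D'$ need not be centered at the origin, so your witness $\bar\sigma$ for the $m=1$ case should be $\overline{\sigma-c}$ with $c$ the center of $D'$; and the $m=1$ case is anyway subsumed by the general one, so it buys nothing.)

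The missing idea is that the annihilator must be spread over \emph{all} angular modes, with the radial data in the different modes coupled through a finite-dimensional space. The paper restricts to a closed annulus $A[R_1,R_2]$ with $R_1R>R_2^2$ (which also sidesteps the $r^{-k}$ singularities at the origin that a disk-based, all-modes construction would meet), confines each normalized radial component $\xi_k(f_0)$ to the $(m+1)$-dimensional span of orthonormal polynomials $p_0,\ldots,p_m$ on $[R_1^2,R_2^2]$, and rewrites the resulting null condition as $\sum_{n=0}^m\phi_n(\bar g_i)\psi_n=0$ for $i=1,\ldots,m$: a homogeneous linear system of $m$ equations in $m+1$ unknowns over the commutative ring of holomorphic functions on a second annulus, which has a nonzero solution because the associated $(m+1)\times(m+1)$ matrix (padded with a zero row) has vanishing determinant. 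The Laurent coefficients of the $\psi_n$ then simultaneously furnish the radial components of $f_0$ in every mode, and the hypothesis $R_1R>R_2^2$ guarantees convergence. Your proposal has only one radial function's worth of freedom against $m$ countable families of constraints confined to a single mode; it lacks precisely this extra $(m+1)$-fold degree of freedom and the cross-mode coupling, so the proof is incomplete at its central step.
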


{\em Outline of proof:}  
By Lemma~\ref{lem:controllablesubspace}, Theorem~\ref{thm:specialcomplex} will be established if we can show that $\mathcal{L}(\omega,b)$ is a {\em proper} subspace of ${\rm L}^2(D_0[R], \C)$, where $\omega$ denotes the identity function on $D_0[R]$.  
In particular, if there exists a nonzero $f_0\in \mathrm{L}^2(D_0[R],\C)$  perpendicular to every subspace $\mathcal{L}(\omega, b_i)$, for $i = 1,\ldots, m$, then $f_0$ is perpendicular to $\mathcal{L}(\omega, b)$, which implies that $\mathcal{L}(\omega, b) \subsetneq {\rm L}^2(D_0[R], \C)$.  

The above arguments indicate that one can translate the $\mathrm{L}^2$-controllability problem for normal forms into the following intersection problem: Given finitely, but arbitrarily, many real-analytic functions $b_i: D_0[R]\to \C$, for $i = 1,\ldots, m$, is the intersection $\cap_{i = 1}^m \mathcal{L}^\perp(\omega, b_i)$  always nontrivial, where $\mathcal{L}^\perp(\omega, b_i)$ is the subspace of ${\rm L}^2(D_0[R], \C)$ perpendicular to $\mathcal{L}(\omega, b_i)$? 
We show that the answer is affirmative; we borrow a terminology from topology and call such a property the {\em finite intersection property}. This property will be formulated as a theorem, Theorem~\ref{thm:intersection}, in Subsection~\S\ref{ssec:finiteintersection}.   

The proof of existence of a desired $f_0$ is constructive, and it will take several steps. 
First, we use polar coordinates (i.e., $\sigma = r e^{\mathrm{i}\theta}$) to express each $b_i$ as a doubly infinite series $b_i(r,\theta) = \sum_{k = -\infty}^\infty \rho_{i,k}(r) e^{\mathrm{i}k\theta}$. Similarly, we write $f_0(r,\theta) = \sum_{k = -\infty}^\infty \rho_{0,k}(r)e^{\mathrm{i}k\theta}$. We call $\rho_{0,k}$ the {\em radius components} of $f_0$ and require that they satisfy certain conditions introduced in Definition~\ref{def:uec} so that the series $f_0$ is uniformly and exponentially convergent.   
Since $f_0$ is uniquely determined by its radius components (and vice versa), to construct $f_0$, it suffices to construct $\rho_{0,k}$. 
We do so by first establishing a necessary and sufficient condition on $\rho_{0,k}$, termed the {\em null condition}, for the resulting series $f_0$ to be perpendicular to every $\mathcal{L}(\omega,b_i)$ for $i = 1,\ldots, m$. This is done in Subsection~\S\ref{ssec:core}. 
Then, in Subsection~\ref{ssec:proof}, we exhibit appropriate $\rho_{0,k}$ they satisfy the null condition and render $f_0(r,\theta) = \sum_{k = -\infty}^\infty \rho_{0,k}(r)e^{\mathrm{i}k\theta}$ a desired convergent series.

It is worth pointing out that the analysis outlined above will be carried out on a closed annulus $A_1$ inside $D_0[R]$, rather than the disk $D_0[R]$ itself.  Specifically, we restrict each $b_i$ to $A_1$, and construct a nonzero $f_0$ on $A_1$ perpendicular to every subspace $\mathcal{L}(\omega_{A_1}, b_i |_{A_1})$. 
One then extends $f_0$ to a nonzero function $\tilde f_0\in \mathrm{L}^2(D_0[R],\C)$ simply by letting $\tilde f_0$ be identically~$0$ on $D_0[R]\backslash A_1$; it should be clear that $\tilde f_0$ is perpendicular to the subspaces $\mathcal{L}(\omega, b_i)$. 
The reason of performing the above-mentioned restriction on the domain (from $D_0[R]$ to $A_1$) is that by our construction, the radius components $\rho_{0,k}$ of $f_0$ will take the form $\rho_{0,k}(r) = q_k(r) r^{-k}$, where $q_k$ are polynomials with degrees less than or equal to~$m$ (the construction will be given in   Proposition~\ref{prop:constructionoff0}). 
Thus, the functions $\rho_{0,k}(r)$, for $k> m$, may diverge as $r$ approaches~$0$ and, hence,  the series $f_0(r,\theta) = \sum_{k = -\infty}^\infty \rho_{0,k}(r)e^{\mathrm{i}k\theta}$ may not be convergent for $r$ sufficiently small.

\subsection{Regularization condition}\label{ssec:regularization}
In this subsection, we introduce  a condition  that regularizes the $b$-vector in the normal form~\eqref{eq:normalform}. 
We show that this condition can be assumed for free when proving Theorem~\ref{thm:specialcomplex} and will be of great use in the analysis. 
To state the condition, we first recall that  a real-analytic function $f:D_0[R]\to \C$ can be locally represented by a convergent power series ({\em Maclaurin series}) in $\sigma$ and $\bar \sigma$:
\begin{equation}\label{eq:powerseries}
f(\sigma) = \sum^\infty_{k = 0}\sum^\infty_{\ell = 0} c(k,\ell) \sigma^k \bar \sigma^\ell, \quad \mbox{for all $\sigma$ such that } |\sigma| < \delta,  
\end{equation}  
where the coefficients $c(k,\ell)$ are complex numbers with $k$ and $\ell$ indicating the powers of $\sigma$ and $\bar \sigma$, respectively. The {\em radius of convergence} is defined to be the supremum of~$\delta$ such that~\eqref{eq:powerseries} holds.  
We now introduce the regularization condition:

\begin{definition}\label{def:regcond}
	A real-analytic function $f: D_0[R]\to \C$ is {\bf regularized} if $f$ is nonzero everywhere over $D_0[R]$, and the Maclaurin series of $f$ and of $f^{-1}$ have radii of convergence greater than $R$. 
\end{definition}

With the definition above, we establish the following result:

\begin{proposition}\label{prop:regcond}
When proving Theorem~\ref{thm:specialcomplex}, one can assume for free that every entry $b_i$ of the $b$-vector in system~\eqref{eq:normalform} is regularized.
\end{proposition}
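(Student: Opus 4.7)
The plan is to reduce an arbitrary normal form to a regularized one by a pullback to a small sub-disk together with a translation of the parameterization. The argument rests on three elementary facts: (i) a nonzero real-analytic function on $D_0[R]$ has zero set of Lebesgue measure zero, so a finite union of such zero sets has empty interior; (ii) at any point $\sigma_0$ where $b_i(\sigma_0)\neq 0$, both $b_i$ and $1/b_i$ are real-analytic at $\sigma_0$, so their Maclaurin series recentered at $\sigma_0$ have strictly positive radii of convergence; and (iii) by a one-line binomial identity, a constant shift of the $A$-function does not alter the controllable subspace, so $\mathcal{L}(\tilde\sigma+\sigma_0,\tilde b)=\mathcal{L}(\tilde\sigma,\tilde b)$.

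First, any $b_i$ that is identically zero can be dropped from the $b$-vector without changing $\mathcal{L}(\mathrm{id},b)$, so we may assume each $b_i$ is a nonzero real-analytic function. By (i), there exists $\sigma_0$ in the interior of $D_0[R]$ with $b_i(\sigma_0)\neq 0$ for every $i=1,\dots,m$. Let $r_i>0$ and $r_i'>0$ be the radii of convergence guaranteed by (ii) for the Maclaurin series of $\tilde\sigma\mapsto b_i(\sigma_0+\tilde\sigma)$ and of its reciprocal around $\tilde\sigma=0$. Choose
\begin{equation*}
0<R'<\min\!\Bigl(R-|\sigma_0|,\;\min_{i} r_i,\;\min_{i} r_i'\Bigr),
\end{equation*}
so that $D_{\sigma_0}[R']$ lies in the interior of $D_0[R]$ and, setting $\tilde b(\tilde\sigma):=b(\sigma_0+\tilde\sigma)$, each $\tilde b_i$ is regularized on $D_0[R']$ in the sense of Def.~\ref{def:regcond}.

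Finally, consider the subensemble of \eqref{eq:normalform} over $D_{\sigma_0}[R']$. Relabeling via $\tilde\sigma:=\sigma-\sigma_0\in D_0[R']$, this subensemble has $A$-function $\tilde\sigma+\sigma_0$ and $B$-function $\tilde b$. By (iii), its $\mathrm{L}^2$-controllable subspace equals $\mathcal{L}(\mathrm{id}_{D_0[R']},\tilde b)$, i.e., the controllable subspace of the regularized normal form $(\mathrm{id},\tilde b)$ over $D_0[R']$. Assuming Theorem~\ref{thm:specialcomplex} in the regularized case, this regularized normal form is not $\mathrm{L}^2$-controllable, and applying Lemma~\ref{lem:embedding} to the inclusion $D_{\sigma_0}[R']\hookrightarrow D_0[R]$ (a $\mathrm{C}^1$-embedding into the interior of $D_0[R]$) yields non-controllability of the original normal form. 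The reduction is essentially routine; the only bookkeeping is verifying that the above choice of $R'$ makes both Maclaurin radii $r_i,r_i'$ strictly exceed $R'$, which is literally Def.~\ref{def:regcond}.
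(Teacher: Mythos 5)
Your proof is correct, but it takes a genuinely different route from the paper's. The paper never leaves the origin: it first arranges $b_i(0)\neq 0$ for every $i$ by an algebraic trick --- append an auxiliary entry $b_0$ with $b_0(0)\neq 0$ and replace each $b_i$ vanishing at $0$ by $b_i+b_0$; since the new entries and the old ones (together with $b_0$) are linear combinations of each other, the controllable subspace can only grow, so uncontrollability of the modified normal form implies uncontrollability of the original. It then shrinks the radius to an $R'$ below the radii of convergence of the Maclaurin series of $b_i$ and $b_i^{-1}$ at the origin and invokes Lemma~\ref{lem:embedding} for the subensemble-$D_0[R']$. You instead move the center: you locate a common non-vanishing point $\sigma_0$ via the measure-zero zero sets of nonzero real-analytic functions, pull back by the translation $\tilde\sigma\mapsto\tilde\sigma+\sigma_0$, and restore the identity $A$-function with the binomial shift identity $\mathcal{L}(\id+\sigma_0,\tilde b)=\mathcal{L}(\id,\tilde b)$ --- the same identity the paper only deploys later, in the proof of Prop.~\ref{prop:finalpiece}. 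Both reductions are sound: the paper's avoids the shift identity and any genericity argument at this stage and keeps the reduced system literally in normal form throughout, while yours avoids the augmentation trick and handles all entries simultaneously at one point. Two small points are worth making explicit in your write-up: (a) if every $b_i$ is identically zero the system is trivially uncontrollable, so the ``drop the zero entries'' step needs that one-line remark to cover the degenerate case; and (b) the ``nonzero everywhere on $D_{\sigma_0}[R']$'' clause of Def.~\ref{def:regcond} follows because the recentered series of $1/b_i$ represents $1/b_i$ on $|\tilde\sigma|<r_i'$, hence $b_i$ cannot vanish there --- this is implicit in your choice of $R'$ but deserves a sentence.
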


\begin{proof}
We first show that the following condition can be assumed for free: every entry $b_i$ satisfies $b_i(0) \neq 0$. We do so by establishing the  fact that one can always construct another normal form $(\omega, \tilde b)$, with $\tilde b_i(0)\neq 0$ for all~$i$, such that uncontrollability of $(\omega, \tilde b)$ implies uncontrollability of $(\omega, b)$. 

To this end, we choose an arbitrary real-analytic function $b_0: D_0[R]\to\C$ with $b_0(0)\neq 0$. By concatenating $b_0$ with the row vector $b$, we obtain an augmented row vector 
$
\hat b:= [b_0, b_1, \cdots, b_m]  
$. It should be clear that $\mathcal{L}(\omega, b) \subseteq \mathcal{L}(\omega, \hat b)$. 
Next, for each $i = 1,\ldots, m$, let $\tilde b_i: D_0[R]\to \mathbb{C}$ be defined such that $\tilde b_i := b_i + b_0$ if $b_i(0)= 0$ and $\tilde b_i := b_i$ otherwise. By construction, $\tilde b_i(0)\neq 0$ for all $i = 0,\ldots, m$. Now, let $\tilde b:= [\tilde b_0,\ldots, \tilde b_m]$. 
Since each $\tilde b_i$ is a linear combination of the $b_i$ and vice versa, we have that $\mathcal{L}(\omega, \hat b) = \mathcal{L}(\omega, \tilde b)$. It then follows that $\mathcal{L}(\omega, b)\subseteq \mathcal{L}(\omega, \tilde b)$. Thus, if $(\omega, \tilde b)$ is not controllable, then neither is $(\omega, b)$.

By the above arguments, we can now assume that $b_i(0)\neq 0$ for all~$i$. 
Because $b$ is continuous and because each $b_i(0)$ is nonzero, there is a radius $R'$, with $ 0 < R' \leq R$, such that $b_i(\sigma) \neq 0$ for all $\sigma\in D_0[R']$ and for all $i = 1,\ldots, m$. Thus, $b_i$ and $b^{-1}_i$ are well defined on $D_0[R']$ and are locally represented by the corresponding Maclaurin series. By shrinking $R'$, if necessary, we can assume that $R'$ is smaller than the radii of convergence of those series. It follows that the condition given in the statement of the proposition will be satisfied if $R$ is replaced with $R'$.  
By Lemma~\ref{lem:embedding}, to show that system~\eqref{eq:normalform} is not controllable, it suffices to show that the subensemble-$D_0[R']$ is not controllable. We can thus assume that the regularization condition is satisfied without passing~\eqref{eq:normalform} to any of its subensembles. This completes the proof. 
\end{proof}

\subsection{Convergent series on annulus}\label{ssec:ringofuec}
In this subsection, we introduce the closed annulus $A_1$ as indicated earlier in the outline of proof,  
and a special class of continuous functions on $A_1$, each of which can be represented by a certain convergent series.  
To this end, let $R_1$ and $R_2$ be positive real numbers such that $0 < R_1 < R_2 < R$. Let $A[R_1, R_2]$ be a closed annulus inside $D_0[R]$: 
\begin{equation}\label{eq:annulus1}
A[R_1, R_2] := \{\sigma \in \C \mid R_1 \le |\sigma | \le R_2 \}. 
\end{equation}
For convenience, we use $A_1 := A[R_1, R_2]$ as a short notation.  
To introduce the above-mentioned continuous functions on $A_1$,  
we use polar coordinates (i.e., $\sigma = re^{\mathrm{i}\theta}$):

\begin{definition}\label{def:uec}
Let $\rho_k: [R_1, R_2] \to \C$, for $k\in \mathbb{Z}$, be continuous functions.   
	The following doubly infinite series $f: A_1\to \C$: 
	\begin{equation}\label{eq:doublyinfiniteseries}
	f(r, \theta) := \sum^\infty_{k = -\infty} \rho_k(r) e^{\mathrm{i}k\theta}
	\end{equation} 
	is {\bf uniformly and exponentially convergent}  {\normalfont(uec)} if there exists a real number $q > 1$ such that  
	$$\sum^{\infty}_{k = -\infty} \|\rho_k\|_{\mathrm{L}^\infty} q^{|k|} < \infty.$$   
	 We call $\rho_k$ the {\bf radius components} of~$f$. 
\end{definition}

Note that by the uniform limit theorem, each {\em uec} series is a continuous function. Denote by $\mathcal{K}$ the set of all {\em uec} series:
$$
\mathcal{K}:= \left \{ f \in \mathrm{C}^0(A_1, \C) \mid f \mbox{ is represented by a {\em uec} series}\right \}.
$$ 

Next, we define a set of functions $\eta_k: \mathcal{K}\to \mathrm{C}^0([R_1,R_2], \C)$, for $k\in \mathbb{Z}$, by sending an {\em uec} series $f$ to its radius components $\rho_k$. The maps $\eta_k$ are  explicitly given by: 
\begin{equation}\label{eq:defeta}
\eta_k(f)(r) := \frac{1}{2\pi}\int^\pi_{-\pi} f(r, \theta) e^{-\mathrm{i} k \theta} \mathrm{d}\theta, \quad \mbox{for } r\in [R_1,R_2].  
\end{equation}

The set $\mathrm{C}^0(A_1, \C)$ is an algebra (over $\C$) with identity: Addition and multiplication are pointwise, and the identity element is simply ${\bf 1}_{A_1}$. We have the following result:

\begin{proposition}\label{prop:closedunderadditionandmultiplication}
The set $\mathcal{K}$ is a subalgebra of $\mathrm{C}^0(A_1, \C)$ with identity. 
\end{proposition}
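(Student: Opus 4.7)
The plan is to verify the three conditions (closure under addition and scalar multiplication, closure under multiplication, and containment of the identity ${\bf 1}_{A_1}$) directly from Def.~\ref{def:ueac}. I will exploit the key fact that $\gamma^{|k|} \leq \gamma^{|j|}\gamma^{|k-j|}$ for $\gamma > 1$, which turns the Cauchy product into a product of sums.

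For the identity element, I observe that the constant function ${\bf 1}_{A_1}$ has radius components $\rho_0 \equiv 1$ and $\rho_k \equiv 0$ for $k \neq 0$, so the convergence condition holds trivially with any $\gamma > 1$. For closure under scalar multiplication, if $f \in \mathcal{K}$ has components $\rho_k$ and witness $\gamma$, then $cf$ has components $c\rho_k$ with the same $\gamma$. For closure under addition, given $f,g \in \mathcal{K}$ with components $\rho_k, \tau_k$ and witnesses $\gamma_f, \gamma_g > 1$, I would take $\gamma := \min(\gamma_f, \gamma_g) > 1$ and use the triangle inequality $\|\rho_k + \tau_k\|_{\mathrm{L}^\infty} \leq \|\rho_k\|_{\mathrm{L}^\infty} + \|\tau_k\|_{\mathrm{L}^\infty}$ to bound the relevant sum by the sum of the two finite witnessing sums.

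The main content is closure under multiplication. Given $f,g \in \mathcal{K}$ as above, I would propose the Cauchy product components
\[
\mu_k(r) := \sum_{j = -\infty}^{\infty} \rho_j(r) \tau_{k-j}(r),
\]
and show the series on the right converges in $\mathrm{L}^\infty$ norm. Setting $\gamma := \min(\gamma_f, \gamma_g) > 1$, I would estimate
\[
\sum_{k = -\infty}^{\infty} \|\mu_k\|_{\mathrm{L}^\infty} \gamma^{|k|} \leq \sum_{k,j} \|\rho_j\|_{\mathrm{L}^\infty} \|\tau_{k-j}\|_{\mathrm{L}^\infty} \gamma^{|j|} \gamma^{|k-j|} = \left(\sum_j \|\rho_j\|_{\mathrm{L}^\infty} \gamma^{|j|}\right)\left(\sum_\ell \|\tau_\ell\|_{\mathrm{L}^\infty} \gamma^{|\ell|}\right) < \infty,
\]
where the inequality uses $|k| \leq |j| + |k-j|$ together with $\gamma > 1$. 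This shows that $\sum_k \mu_k(r) e^{\mathrm{i}k\theta}$ is a ueac series. It remains to identify its sum with the pointwise product $f \cdot g$: because both defining series of $f$ and $g$ are absolutely and uniformly convergent on $A_1$, I can multiply them termwise and rearrange the resulting doubly indexed sum into the Cauchy form, obtaining $(fg)(r,\theta) = \sum_k \mu_k(r) e^{\mathrm{i}k\theta}$. Hence $fg \in \mathcal{K}$.

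I do not anticipate significant obstacles: the only subtlety is the careful interchange of summation orders in the Cauchy-product estimate, which is justified by Fubini/Tonelli once the absolute convergence of the double sum is established via the inequality above. The continuity of elements of $\mathcal{K}$ (needed so that $\mathcal{K} \subseteq \mathrm{C}^0(A_1,\C)$) is already guaranteed by the uniform limit theorem as noted right after Def.~\ref{def:ueac}.
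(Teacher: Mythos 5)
Your proposal is correct, and it follows the same overall strategy as the paper (verify the subalgebra axioms directly; the only real content is the Cauchy-product estimate), but the key estimate is executed differently. The paper first weakens the hypothesis $\sum_k \|\eta_k(f_i)\|_{\mathrm{L}^\infty}\delta^{|k|}<\infty$ to a pointwise bound $\|\eta_k(f_i)\|_{\mathrm{L}^\infty}<M\delta^{-|k|}$, computes the convolution of two such geometric-type sequences explicitly to get a bound of the form $\bigl(|k|+\tfrac{\delta^2+1}{\delta^2-1}\bigr)M^2\delta^{-|k|}$ on $\|\rho_k\|_{\mathrm{L}^\infty}$, and then must pass to a strictly smaller exponent base $\gamma\in(1,\delta)$ to absorb the polynomial factor $|k|$. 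You instead use the full summability hypothesis together with the submultiplicativity $\gamma^{|k|}\le\gamma^{|j|}\gamma^{|k-j|}$ of the weight and Tonelli, which yields the weighted $\ell^1$-norm of the product bounded by the product of the weighted $\ell^1$-norms \emph{at the same} $\gamma=\min(\gamma_f,\gamma_g)$. This is the standard proof that a weighted $\ell^1$-space with submultiplicative weight is a Banach algebra under convolution; it is cleaner, gives a quantitative norm inequality, and avoids the loss in the exponent base, none of which matters for the qualitative statement being proved. Your identification of the Cauchy-product series with the pointwise product $fg$ via absolute convergence is also slightly more careful than the paper, which simply writes down the formal series. No gaps.
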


\begin{proof}
It should be clear that~${\bf 1}_{A_1}$ belongs to $\mathcal{K}$ 
and that $\mathcal{K}$ is a subspace of $\mathrm{C}^0(A_1, \C)$ from Definition~\ref{def:uec}. We show below that 
$\mathcal{K}$ is closed under multiplication. i.e., for any two $f_1, f_2\in \mathcal{K}$, $f_1f_2\in \mathcal{K}$.  
To proceed, we first express $f_1f_2$ using the following formal series: 
	\begin{align*}
	(f_1 f_2) (r, \theta) & = \left (\sum^\infty_{k = -\infty} \eta_k(f_1) e^{\mathrm{i}k\theta}\right) \left (\sum^\infty_{\ell = -\infty} \eta_\ell(f_2) e^{\mathrm{i}\ell\theta}\right)  \\
	& = \sum^\infty_{k = -\infty} \left [ \sum^\infty_{\ell = -\infty} \big ( \eta_{k - \ell}(f_1)  \eta_{\ell}(f_2) \big ) (r) \right ] e^{\mathrm{i}k\theta}. 
	\end{align*}
We show below that for each $k\in \mathbb{Z}$, the series $\rho_k:= \sum_{\ell = -\infty}^\infty \eta_{k - \ell}(f_1)  \eta_{\ell}(f_2)$	 is uniformly and absolutely convergent on $[R_1, R_2]$. Since $f_1, f_2\in \mathcal{K}$, by Definition~\ref{def:uec}, there exist a $p > 1$ and an $M > 0$ such that 
$\|\eta_k(f_i)\|_{\mathrm{L}^\infty} p^{|k|} < M$ for all $k\in \mathbb{Z}$ and for all $i = 1,2$.  
Then, for any $r\in [R_1,R_2]$, 
\begin{align}\label{eq:donotdependonr}
\sum_{\ell = -\infty}^\infty \left |\big (\eta_{k-\ell}(f_1)\eta_\ell(f_2) \big ) (r) \right | & \leq \sum^\infty_{\ell = -\infty} \| \eta_{k - \ell}(f_1) \|_{\mathrm{L}^\infty} \|\eta_{\ell}(f_2)\|_{\mathrm{L}^\infty}  \notag \\ 
& <   \sum^\infty_{\ell = -\infty} \frac{M^2}{p^{|k - \ell| + |\ell|}} =  \left ( |k| + \frac{p^2 + 1}{p^2 - 1} \right ) \frac{M^2}{p^{|k|}}. 
\end{align}
It now remains to show that there exists a $q > 1$ such that $\sum_{k = -\infty}^\infty \|\rho_k\|_{\mathrm{L}^\infty}q^{|k|} < \infty$.
By~\eqref{eq:donotdependonr}, we have that for any $q\in (1,p)$,   
	\begin{equation*}
	\sum_{k = -\infty}^\infty \|\rho_k\|_{\mathrm{L}^\infty} q^{|k|} <  \sum_{k = -\infty}^\infty M^2\left ( |k| + \frac{p^2 + 1}{p^2 - 1} \right ) \left ( \frac{q}{p}\right )^{|k|} < \infty.  
	\end{equation*} 
	This completes the proof.    
\end{proof}

We next introduce a set $\mathcal{P}$, obtained by restricting regularized, real-analytic functions to the annulus $A_1$. Specifically, let
\begin{equation}\label{eq:definepa1}
\mathcal{P}:= \left \{ f |_{A_1} \mid  f:D_0[R]\to \C \mbox{ is real-analytic and regularized} \right \}.  
\end{equation}  
The elements of $\mathcal{P}$ will be used in the next subsection to index a special class of  subspaces of $\mathrm{L}^2(A_1,\C)$, termed {\em featured spaces}.  
We have the following result:

\begin{proposition}\label{prop:realanalytictoK}
The set $\mathcal{P}$ defined in~\eqref{eq:definepa1} is a subset of $\mathcal{K}$. 
\end{proposition}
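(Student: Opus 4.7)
The plan is to use the Maclaurin series of a regularized real-analytic $f$ to expose its Fourier-in-$\theta$ decomposition explicitly, then estimate its radius components via Cauchy-type bounds.

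First, fix a regularized real-analytic $f:D_0[R]\to\C$. By Def.~\ref{def:regcond}, the series~\eqref{eq:powerseries} converges on a disk $|\sigma|\le R''$ for some $R''>R$. In particular, the bivariate power series $\sum c(k,\ell) z_1^k z_2^\ell$ converges absolutely on $|z_1|,|z_2|\le R''$, so by the standard Cauchy estimate there is a constant $M>0$ with $|c(k,\ell)|\le M/(R'')^{k+\ell}$ for all $k,\ell\ge 0$.

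Second, substitute $\sigma=re^{\mathrm{i}\theta}$ and $\bar\sigma=re^{-\mathrm{i}\theta}$ in~\eqref{eq:powerseries}, and regroup the (absolutely convergent) double series by the integer $n=k-\ell$:
\begin{equation*}
f(re^{\mathrm{i}\theta}) \;=\; \sum_{n=-\infty}^{\infty}\rho_n(r)\,e^{\mathrm{i}n\theta},
\qquad
\rho_n(r) \;=\; \sum_{\substack{k,\ell\ge 0\\ k-\ell=n}} c(k,\ell)\,r^{k+\ell}.
\end{equation*}
For $n\ge 0$, setting $k=n+\ell$ gives $\rho_n(r)=\sum_{\ell\ge 0}c(n+\ell,\ell)\,r^{n+2\ell}$, and similarly for $n<0$. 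The Cauchy bound yields
\begin{equation*}
|\rho_n(r)| \;\le\; \sum_{\ell\ge 0}\frac{M}{(R'')^{|n|+2\ell}}\,r^{|n|+2\ell}
\;=\; M\Bigl(\tfrac{r}{R''}\Bigr)^{|n|}\,\frac{1}{1-(r/R'')^2}
\end{equation*}
for every $r\in[R_1,R_2]$. Since $R_2<R<R''$, we conclude that there is a constant $C>0$ (independent of $n$) such that $\|\rho_n\|_{\mathrm{L}^\infty([R_1,R_2])}\le C\,(R_2/R'')^{|n|}$.

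Third, pick any $\gamma$ with $1<\gamma<R''/R_2$. Then
\begin{equation*}
\sum_{n=-\infty}^{\infty}\|\rho_n\|_{\mathrm{L}^\infty}\,\gamma^{|n|}
\;\le\; C\sum_{n=-\infty}^{\infty}\Bigl(\tfrac{\gamma R_2}{R''}\Bigr)^{|n|}
\;<\;\infty,
\end{equation*}
which is exactly the ueac condition of Def.~\ref{def:ueac}. Hence $f|_{A_1}\in\mathcal{K}$, proving $\mathcal{P}\subseteq\mathcal{K}$.

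I do not expect a real obstacle here: the only subtle point is the interchange of the grouping of the double Maclaurin series, which is justified by the absolute convergence guaranteed by the regularization condition on a neighborhood strictly larger than $D_0[R]\supset A_1$. The gap $R_2<R<R''$ provided by Def.~\ref{def:regcond} is precisely what gives the exponential rate $\gamma>1$; if one only knew convergence on $D_0[R]$ itself, one would get $\gamma=1$ at best, which is insufficient.
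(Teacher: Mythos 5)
Your proposal is correct and follows essentially the same route as the paper: both regroup the Maclaurin double series by $n=k-\ell$ into the radius components and exploit the gap $R_2<R<R''$ from the regularization condition to obtain a $\gamma>1$. The only cosmetic difference is that you insert an intermediate Cauchy coefficient bound $|c(k,\ell)|\le M/(R'')^{k+\ell}$ and sum geometric series, whereas the paper bounds $\sum_k\|\rho_k\|_{\mathrm{L}^\infty}\gamma^{|k|}$ directly by the absolutely convergent sum $\sum_{k,\ell}|c(k,\ell)|R^{k+\ell}$ with $\gamma=R/R_2$.
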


\begin{proof}
Let~$f: D_0[R]\to \C$ be a regularized, real-analytic function.  
Using the polar coordinates, we re-write the Maclaurin series~\eqref{eq:powerseries} of~$f$ as $
f(r, \theta) = \sum^\infty_{k = -\infty} \rho_k(r) e^{\mathrm{i}k\theta}
$, 
where $\rho_k: [0, R]\to \C$ are given by the uniformly and absolutely convergence series:  
\begin{equation}\label{eq:radiusseries}
\rho_k(r):= 
\sum^\infty_{\ell = 0} c\left (\ell + \frac{1}{2}(|k| + k),\, \ell + \frac{1}{2}(|k| - k) \right ) r^{2\ell + |k|}. 
\end{equation}	 
Next, let $\rho'_k: = \rho_k |_{[R_1,R_2]}$ and $q:= \nicefrac{R}{R_2} > 1$. Then, using~\eqref{eq:radiusseries} and the fact that $r\leq R_2$,   
	we have that 	
	\begin{align}\label{eq:fA1isinK}
	\sum^\infty_{k = -\infty} \|\rho'_k\|_{\mathrm{L}^\infty} q^{|k|} 
	& \leq \sum^\infty_{k = -\infty} \sum^\infty_{\ell = 0}\left |c\left (\ell + \nicefrac{(|k| + k)}{2},\, \ell + \nicefrac{(|k| - k)}{2} \right ) \right | R^{2\ell + |k|}_2 \left (\nicefrac{R}{R_2} \right)^{|k|} \notag \\
	& \leq \sum^\infty_{k = -\infty} \sum^\infty_{\ell = 0}\left |c\left (\ell + \nicefrac{(|k| + k)}{2},\, \ell + \nicefrac{(|k| - k)}{2} \right ) \right | R^{2\ell + |k|} \notag \\
	& = \sum^\infty_{k = 0} \sum^\infty_{\ell= 0} |c(k,\ell)| R^{k + \ell}.
	\end{align}
	Since $f$ is regularized, the radius of convergence of its Maclaurin series is greater than~$R$ and, hence, the last expression~\eqref{eq:fA1isinK} is bounded above.      
\end{proof}

\subsection{Finite intersection property}\label{ssec:finiteintersection}
In this subsection, we first introduce and characterize a special class of Hilbert subspaces of $\mathrm{L}^2(A_1, \C)$, indexed by elements in~$\mathcal{P}$. We next formulate a theorem, Theorem~\ref{thm:intersection}, which states that intersections of finitely, but arbitrarily, many these subspaces are always nontrivial. 
Theorem~\ref{thm:specialcomplex} will then follow as an immediate consequence of Theorem~\ref{thm:intersection}.

Recall that $\omega_{A_1}$ is the identity function on $A_1$. For ease of notation, we will omit its subindex in the sequel. Let $\mathcal{P}$ be given as in~\eqref{eq:definepa1}. For any $g\in \mathcal{P}$, let  
	\begin{equation}\label{eq:featured}
		\mathcal{K}_g:=  \mathcal{K} \cap \mathcal{L}^\perp(\omega, g) = \left \{f\in \mathcal{K} \mid \langle f, \omega^kg \rangle_{A_1} = 0, \mbox{ for all } k\ge 0 \right \}, 
	\end{equation} 
	Note that the constant function $\mathbf{1}$ belongs to $\mathcal{P}$ (its subindex $A_1$ has been omitted).    
We characterize below the subspaces~$\mathcal{K}_g$:

\begin{proposition}\label{prop:subspacekgeneral}
The following two items hold:
\begin{enumerate}
\item Let $\eta_k$ be defined as in~\eqref{eq:defeta}. The set $\mathcal{K}_{\mathbf{1}}$ comprises all $f\in \mathcal{K}$ such that 
\begin{equation}\label{eq:conditionfork1}
\int^{R_2}_{R_1} \eta_{k}(f)(r)r^{k + 1} \mathrm{d}r = 0, \quad \mbox{for all } k \ge 0.
\end{equation}  
\item Let $g$ be an arbitrary element in $\mathcal{P}$. Then, an $f\in \mathcal{K}$ belongs to $\mathcal{K}_g$ if and only if there exists an $f' \in \mathcal{K}_{\mathbf{1}}$ such that $f = f' \bar g^{-1}$. 
\end{enumerate}	
\end{proposition}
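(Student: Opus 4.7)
For Part 1, the plan is to compute $\langle f, \id^k \rangle_{A_1}$ directly by expanding $f$ in its Fourier series $f(r,\theta) = \sum_{\ell \in \mathbb{Z}} \eta_\ell(f)(r)\, e^{\mathrm{i}\ell\theta}$ and passing to polar coordinates, under which $\mathrm{d}\sigma = r\,\mathrm{d}r\,\mathrm{d}\theta$ and $\sigma^k = r^k e^{\mathrm{i}k\theta}$. The key technical point is that the ueac property (Def.~\ref{def:ueac}) provides a summable dominating bound $\sum_k \|\eta_k(f)\|_{\mathrm{L}^\infty} < \infty$ on $A_1$, which justifies interchanging the angular integral $\int_{-\pi}^\pi$ with the infinite sum. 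The orthogonality relation $\int_{-\pi}^\pi e^{\mathrm{i}(k-\ell)\theta}\,\mathrm{d}\theta = 2\pi \delta_{k\ell}$ then collapses the sum to a single term, giving $\langle f, \id^k \rangle_{A_1} = 2\pi \int_{R_1}^{R_2} \overline{\eta_k(f)(r)}\, r^{k+1}\,\mathrm{d}r$. Since $r^{k+1}$ is real, vanishing of this integral for all $k \ge 0$ is equivalent to~\eqref{eq:conditionfork1}.

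For Part 2, the idea is to substitute $f' := f \bar g$; then $\overline{f'} = \bar f\, g$, so
\begin{equation*}
\langle f, g\id^k \rangle_{A_1} = \int_{A_1} \bar f(\sigma)\, g(\sigma)\, \sigma^k\, \mathrm{d}\sigma = \int_{A_1} \overline{f'}(\sigma)\, \sigma^k\,\mathrm{d}\sigma = \langle f', \id^k \rangle_{A_1}.
\end{equation*}
Hence $f \in \mathcal{K}_g$ if and only if $f'$ satisfies the conditions defining $\mathcal{K}_{\mathbf{1}}$. To finish, I only need to verify that multiplication by $\bar g$ is a bijection of $\mathcal{K}$ onto itself whose inverse is multiplication by $\bar g^{-1}$; combined with Prop.~\ref{prop:closedunderadditionandmultiplication}, this reduces to showing $\bar g, \bar g^{-1} \in \mathcal{K}$.

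The main obstacle is this last containment. By the regularization condition (Def.~\ref{def:regcond}), $g^{-1}$ is itself a regularized real-analytic function on $D_0[R]$, so both $g$ and $g^{-1}$ lie in $\mathcal{P}$ and, by Prop.~\ref{prop:realanalytictoK}, their restrictions to $A_1$ lie in $\mathcal{K}$. For the conjugates, note that conjugating the Maclaurin series~\eqref{eq:powerseries} of $g$ replaces each coefficient $c(k,\ell)$ by $\overline{c(\ell,k)}$, which preserves the radius of convergence; the same applies to $g^{-1}$. Thus $\bar g$ and $\bar g^{-1}$ are regularized, their restrictions lie in $\mathcal{P} \subset \mathcal{K}$, and the algebra structure of $\mathcal{K}$ makes the map $f \mapsto f \bar g$ the required bijection. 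Both directions of Part 2 then follow, completing the proposed proof.
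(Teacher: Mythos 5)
Your proposal is correct and follows essentially the same route as the paper: for item 1, Fubini plus the orthogonality of $e^{\mathrm{i}k\theta}$ (justified by the ueac bound) collapses $\langle f,\id^k\rangle_{A_1}$ to a single radial integral, and for item 2, the substitution $f'=f\bar g$ together with the facts that $\bar g,\bar g^{-1}\in\mathcal{P}\subset\mathcal{K}$ and that $\mathcal{K}$ is closed under multiplication (Prop.~\ref{prop:closedunderadditionandmultiplication}) gives both directions. Your extra care in checking that $g^{-1}$ is itself regularized and that conjugation preserves the radius of convergence only makes explicit a step the paper leaves implicit.
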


\begin{proof} 
We first establish item 1. 
Using polar coordinates, we have that  
\begin{equation}\label{eq:fidk}
\langle \omega^k, f \rangle_{A_1} = \int^\pi_{-\pi}\int^{R_2}_{R_1} \sum^{\infty}_{\ell = -\infty}  \eta_\ell(f)(r) r^{k+1} e^{\mathrm{i}(\ell-k)\theta}\mathrm{d}r\mathrm{d}\theta.   
\end{equation}
Since $f\in \mathcal{K}$, there exists an $M>0$ such that $\sum_{\ell = -\infty}^\infty \|\eta_\ell(f)\|_{\mathrm{L}^\infty} < M$ and, hence, 
$$
\sum_{\ell = -\infty}^\infty|\eta_\ell(f)(r) r^{k+1} e^{\mathrm{i}(\ell - k)\theta} |  < R_2^{k+1} M, \quad \mbox{for all } r\in [R_1,R_2].  
$$
By dominated convergence theorem, we can switch the order of integrals and sum in~\eqref{eq:fidk} and obtain that  
$$
\langle \omega^k, f \rangle_{A_1} =\sum^{\infty}_{\ell = -\infty} \int^{R_2}_{R_1} \int^\pi_{-\pi} \eta_\ell(f)(r) r^{k+1} e^{\mathrm{i}(\ell - k)\theta}\mathrm{d}\theta\mathrm{d}r = 2\pi \int^{R_2}_{R_1}  \eta_{k}(f)(r)r^{k + 1} \mathrm{d}r.  
$$
Thus,  $\langle \omega^k, f \rangle_{A_1} = 0$ if and only if~\eqref{eq:conditionfork1} holds. This establishes item 1.

We next establish item 2. 
First, for any $f\in \mathcal{K}_g$, we let $f':= f\bar g$ and show that $f'\in \mathcal{K}_{\mathbf{1}}$.  
Since $g\in \mathcal{P}$, its complex conjugation $\bar g$ also belongs to $\mathcal{P}$ and, hence,   
to $\mathcal{K}$ by Proposition~\ref{prop:realanalytictoK}. 
By Proposition~\ref{prop:closedunderadditionandmultiplication}, $\mathcal{K}$ is closed under multiplication, so  
$f' \in \mathcal{K}$. Moreover,   
\begin{equation}\label{eq:f'fbarg}
\langle f', \omega^k \rangle_{A_1} = \langle f \bar g, \omega^k \rangle_{A_1} = \langle f, \omega^k g  \rangle_{A_1} = 0,
\end{equation}     
which implies that $f'\in \mathcal{K}_{\mathbf{1}}$. Conversely, for any $f'\in \mathcal{K}_{\bf 1}$, let $f := f' \bar g^{-1}$. Since $g\in \mathcal{P}$, $g$ is regularized and, hence, $g^{-1}$ belongs to $\mathcal{P}$. By the same arguments, $f\in \mathcal{K}$. 
Using again~\eqref{eq:f'fbarg}, we conclude that $f\in \mathcal{K}_g$. This establishes item 2.  
\end{proof}

It should be clear that for any $g\in \mathcal{P}$, $\mathcal{K}_g$ is nontrivial, i.e., it contains nonzero elements.  
Indeed, for the special case $g = {\bf 1}$, the functions $\overline{\omega}^k$, for $k\geq 1$, belong to $\mathcal{K}_{\bf 1}$.  
Then, by item 2 of Proposition~\ref{prop:subspacekgeneral}, $\mathcal{K}_g$ is nontrivial for all $g\in \mathcal{P}$. 
The following result shows that the intersection of finitely, but arbitrarily many, $\mathcal{K}_g$ is also nontrivial:

\begin{theorem}\label{thm:intersection}
Let $A_1 = A[R_1, R_2]$ be the closed annulus in $D_0[R]$ defined in~\eqref{eq:annulus1} and $\mathcal{P}$ be defined in~\eqref{eq:definepa1}. For each $g\in \mathcal{P}$, let $\mathcal{K}_g$ be defined in~\eqref{eq:featured}.  
Suppose that $R_1 R > R_2^2$; 
then, for any finite set $\{g_1,\ldots, g_m\}$ out of $\mathcal{P}$,  $\cap^m_{i = 1} \mathcal{K}_{g_i}$ is nontrivial. 
\end{theorem}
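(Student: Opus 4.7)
The plan is to produce a single nonzero $f_0 \in \mathcal{K}$ that lies in every $\mathcal{K}_{g_i}$, by writing down its Fourier-in-angle decomposition explicitly and then verifying the ueac convergence. First I would use item~2 of Prop.~\ref{prop:subspacekgeneral}: the condition $f_0 \in \mathcal{K}_{g_i}$ is equivalent to $\bar{g}_i f_0 \in \mathcal{K}_{\mathbf{1}}$, which by item~1 reduces via~\eqref{eq:conditionfork1} to a countable family of integral vanishings on the radius components of $\bar{g}_i f_0$. Writing $f_0(r,\theta) = \sum_{k} \rho_{0,k}(r)\,e^{\mathrm{i}k\theta}$ and expanding each $\bar{g}_i$ in Fourier-in-angle as $\sum_{\ell} \bar{\gamma}_{i,\ell}(r)\,e^{-\mathrm{i}\ell\theta}$, the convolution rule for Fourier coefficients converts these vanishings into the null condition
\[
\sum_{\ell} \int_{R_1}^{R_2} \rho_{0,\ell+k}(r)\, \bar{\gamma}_{i,\ell}(r)\, r^{k+1}\, dr \;=\; 0, \qquad k \ge 0,\ i = 1,\dots,m.
\]

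Next I would parametrize the candidate radius components in the form $\rho_{0,k}(r) = q_k(r)\,r^{-k}$, with each $q_k$ a polynomial in $r$ of degree at most a bound depending only on $m$, as announced in the sketch of proof of Theorem~\ref{thm:specialcomplex}. Because every $g_i$ is regularized, the expansion~\eqref{eq:radiusseries} shows that $\bar{\gamma}_{i,\ell}(r)$ carries a leading factor $r^{|\ell|}$, so after this substitution the integrand becomes polynomial in $r$ and the integrals $\int_{R_1}^{R_2} r^j\, dr$ are explicit. The null condition then collapses into an infinite homogeneous linear system whose unknowns are the coefficients of the polynomials $q_k$ and whose rows are indexed by the pairs $(i,k)$. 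Reorganising the unknowns across all $k \in \mathbb{Z}$ and invoking the announced map that sends each $g_i$ to a Laurent series in a formal variable allows one to recast the whole system as a single homogeneous linear equation over a Laurent-series ring, with one such equation per $g_i$. Solvability with a nontrivial solution then follows from a graded count: there are infinitely many unknowns arranged in graded pieces and only $m$ Laurent-series constraints, so each graded piece provides more free parameters than constraints once the degree bound on $q_k$ is chosen large enough in~$m$; moreover, the regularization condition ensures that each $\bar{g}_i$ is invertible in the Laurent-series ring, which is what makes an inductive, graded construction of the $q_k$ feasible.

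Finally I would verify that the resulting formal series $f_0(r,\theta) = \sum_k q_k(r)\,r^{-k}\,e^{\mathrm{i}k\theta}$ is ueac on $A_1$. This is where the hypothesis $R_1 R > R_2^2$ does its essential work: rewritten as $R_2/R_1 < R/R_2$, it says that the worst-case radial blow-up factor contributed by $r^{-k}$ on $[R_1,R_2]$ is strictly dominated by the exponential decay rate furnished by the regularization of $g_i$ (whose Maclaurin series and the Maclaurin series of $g_i^{-1}$ both converge on a disk of radius strictly greater than $R$). This strict margin supplies a $\gamma > 1$ for which $\sum_k \|\rho_{0,k}\|_{\mathrm{L}^\infty([R_1,R_2])}\gamma^{|k|}$ is finite, placing $f_0 \in \mathcal{K}$.

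The principal obstacle I anticipate is the middle step: the Laurent-series formulation must be set up so that a nonzero solution exists and simultaneously satisfies (a)~the uniform polynomial-degree bound on $q_k$ in $k$, and (b)~a geometric growth bound on $\|q_k\|_{\mathrm{L}^\infty}$ that is compatible with the convergence margin of step three. Getting (a) and (b) to coexist in a single solution is the heart of the argument and is exactly what forces the quantitative hypothesis $R_1 R > R_2^2$; I expect the construction to proceed inductively by processing indices in increasing $|k|$, using invertibility of $\bar{g}_i$ in the Laurent-series ring at each stage to determine the next batch of polynomial coefficients in closed form.
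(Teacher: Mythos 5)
Your overall architecture matches the paper's: reduce membership in $\cap_{i}\mathcal{K}_{g_i}$ to the null condition via Prop.~\ref{prop:subspacekgeneral} (the displayed condition you derive is exactly Prop.~\ref{prop:goalupdate1}/Cor.~\ref{cor:nullcondition} before the substitution $s=r^2$), use the ansatz $\rho_{0,k}(r)=q_k(r)r^{-k}$ with $q_k$ of degree bounded in terms of $m$, recast the constraints as $m$ homogeneous linear equations in $m+1$ unknowns over a ring of Laurent series, and close with a ueac estimate driven by $\gamma=\nicefrac{RR_1}{R_2^2}>1$. Your first and last steps are correct and coincide with the paper's (the last is Prop.~\ref{prop:constructionoff0}).

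The genuine gap is the middle step, which you yourself flag as the principal obstacle and then propose to resolve by the wrong mechanism. An induction on increasing $|k|$ that uses ``invertibility of $\bar g_i$ in the Laurent-series ring'' cannot work: the set of formal doubly-infinite Laurent series is not closed under multiplication, invertibility of a single $\bar g_i$ does not let you solve $m$ simultaneous equations, and, most importantly, an index-by-index construction gives no control on the growth of the coefficients it produces --- yet a geometric bound on them is precisely what your step three consumes. The paper's resolution, which is the missing idea, is to never leave a ring of \emph{convergent} series. Writing $q_k(r)=\sum_{n=0}^m\alpha_{n,-k}\,p_n(r^2)$ in an orthonormal polynomial basis $\{p_n\}_{n=0}^{m}$ of $\mathrm{L}^2([s_1,s_2],\C)$, the null condition for $\bar g_i$ is implied by the identical vanishing of $\sum_{n=0}^m\phi_n(\bar g_i)\psi_n$, where $\psi_n(z)=\sum_k\alpha_{n,k}z^k$ and each $\phi_n(\bar g_i)$ is shown (Prop.~\ref{prop:decayalpha}, using the regularization radius $>R$) to be holomorphic on the annulus $A_2=A[\nicefrac{R_2^2}{R},\,R]$. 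This is a system of $m$ equations in $m+1$ unknowns over the commutative ring $\mathcal{H}$ of holomorphic functions on $A_2$; padding the coefficient matrix with a zero row gives a square matrix of determinant zero, and McCoy's theorem yields a nonzero $\psi\in\mathcal{H}^{m+1}$ (Lemma~\ref{lem:nonzeropsi}). Because the solution lies in $\mathcal{H}$, its Laurent coefficients automatically satisfy $\sum_{k\ge 0}|\alpha_{n,k}|R^k<\infty$ and $\sum_{k\ge 0}|\alpha_{n,-k}|\left(\nicefrac{R}{R_2^2}\right)^k<\infty$, which is exactly the quantitative input your convergence step needs: your requirements (a) and (b) coexist not by a delicate induction but because holomorphy on $A_2$ delivers (b) for free. (You also omit the verification that the resulting $f_0$ is nonzero; the paper obtains it by pairing $\rho_{-k'}$ against $p_{n'}$ for some nonvanishing coefficient $\alpha_{n',k'}$.)
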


We call the property described in the above theorem the {\em finite intersection property}.  
Theorem~\ref{thm:specialcomplex} then follows as an immediate consequence of Theorem~\ref{thm:intersection}:

{\em Proof of Theorem~\ref{thm:specialcomplex}.}
	Let $b_1,\ldots, b_m$ be the $m$ entries of the $b$-vector of system~\eqref{eq:normalform}. By Proposition~\ref{prop:regcond}, we can assume for free that each $b_i$ is regularized. 
	Let $g_i:= b_i |_{A_1}$ for all $i = 1,\ldots, m$; then, $g_i\in \mathcal{P}$. By Theorem~\ref{thm:intersection}, there is a nonzero $f_0\in \cap^m_{i = 1} \mathcal{K}_{g_i}$. By definition~\eqref{eq:featured} of $\mathcal{K}_{g_i}$, $f_0$ is perpendicular to $\mathcal{L}(\omega, g)$, where $g:=[g_1,\ldots, g_m]$. 
	We then extend $f_0$ to a nonzero function $\tilde f_0\in \mathrm{L}^2(D_0[R],\C)$ by setting $\tilde f(\sigma) := 0$ for all $\sigma\in D_0[R]\backslash A_1$. By construction, $\tilde f_0$ is perpendicular to $\mathcal{L}(\omega, b)$. We then conclude from Lemma~\ref{lem:controllablesubspace} that the pair $(\omega, b)$ is not $\mathrm{L}^2$-controllable.   
\endproof

The remainder of the section is devoted to the proof of Theorem~\ref{thm:intersection}. 
We will show that there exist a nonzero $f_0 \in \mathcal{K}$ and $m$  functions $f_1,\ldots, f_m \in \mathcal{K}_{\bf 1}$ such that 
\begin{equation}\label{eq:condition1h}
f_0 \bar g_i = f_i, \quad \mbox{for all } i = 1,\ldots, m.
\end{equation} 
Note that if~\eqref{eq:condition1h} holds, then $f_0 = f_i \bar g_i^{-1}$ for all $i = 1,\ldots, m$. By Proposition~\ref{prop:subspacekgeneral}, 
$f_0 \in \cap^m_{i = 1}\mathcal{K}_{g_i}$, i.e., Theorem~\ref{thm:intersection} is established.

\subsection{Null condition}\label{ssec:core}  
    
In this subsection, we establish a necessary and sufficient condition, termed {\em null condition}, for two functions $f,g\in \mathcal{K}$ to satisfy $f g\in \mathcal{K}_{\mathbf{1}}$. 
Recall that $R_1$ and $R_2$ are the inner- and outer-radii of the annulus $A_1$. For convenience, let $s_1:=R^2_1$ and $s_2:=R^2_2$. 
Using the functions $\eta_k$ defined in~\eqref{eq:defeta}, 
we introduce another set of functions $\xi_k: \mathcal{K}\to \mathrm{C}^0([s_1,s_2], \C)$, for $k\in \mathbb{Z}$. For any $f\in \mathcal{K}$, let 
$\xi_k(f): [s_1, s_2]\to \C$ be defined as follows: 
\begin{equation}\label{eq:defxik}
\xi_k(f)(s) := \eta_k(f)(\sqrt{s}) s^{\frac{k}{2}} = \frac{1}{2\pi}  \int_{-\pi}^{\pi} f(\sqrt{s}, \theta) e^{-\mathrm{i}k\theta} \mathrm{d}\theta s^{\frac{k}{2}}.
\end{equation}
To introduce the null condition, we first have the following result:

\begin{proposition}\label{prop:goalupdate1}
	Let $\xi_k$ be defined as in~\eqref{eq:defxik}. Then, for any $f, g\in \mathcal{K}$,  
	\begin{equation}\label{eq:interchangesumandintegral}
	\int^{R_2}_{R_1}\eta_k(fg)(r) r^{k+1} \mathrm{d}r = \frac{1}{2}\sum^\infty_{\ell = -\infty}\left \langle \bar \xi_{\ell}(f), \, \xi_{k-\ell}(g) \right \rangle, \quad \mbox{for all } k\in \mathbb{Z}.  
	\end{equation} 
	
\end{proposition}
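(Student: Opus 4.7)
The plan is to expand $\eta_k(fg)$ as a convolution of the radius components of $f$ and $g$, perform the change of variables $s = r^2$ in the outer integral, and then recognize the resulting expression as a sum of inner products. The justification for interchanging sum and integral will come directly from the \emph{ueac} property established in Prop.~\ref{prop:closedunderadditionandmultiplication}.

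First, I will write $f(r,\theta) = \sum_\ell \eta_\ell(f)(r) e^{\mathrm{i}\ell\theta}$ and $g(r,\theta) = \sum_\ell \eta_\ell(g)(r) e^{\mathrm{i}\ell\theta}$. Multiplying and collecting like powers of $e^{\mathrm{i}k\theta}$, I obtain formally
\begin{equation*}
\eta_k(fg)(r) \;=\; \sum_{\ell = -\infty}^{\infty} \eta_\ell(f)(r)\,\eta_{k-\ell}(g)(r).
\end{equation*}
The validity of this identity, together with uniform and absolute convergence of the inner sum on $[R_1, R_2]$, is exactly what was shown in the course of proving Prop.~\ref{prop:closedunderadditionandmultiplication} (see Eq.~\eqref{eq:donotdependonr}, which bounds the partial sums uniformly in $r$ by a summable quantity independent of $r$).

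Next, I would substitute $s = r^2$, so that $r^{k+1}\,\mathrm{d}r = \tfrac{1}{2} s^{k/2}\,\mathrm{d}s$, and rewrite
\begin{equation*}
\int_{R_1}^{R_2} \eta_k(fg)(r)\, r^{k+1}\,\mathrm{d}r
\;=\; \frac{1}{2}\int_{s_1}^{s_2} \sum_{\ell = -\infty}^{\infty} \eta_\ell(f)(\sqrt{s})\,\eta_{k-\ell}(g)(\sqrt{s})\, s^{k/2}\,\mathrm{d}s.
\end{equation*}
Splitting the factor $s^{k/2}=s^{\ell/2}\cdot s^{(k-\ell)/2}$ and invoking the definition of $\xi_k$ in~\eqref{eq:defxik} turns the integrand into $\sum_\ell \xi_\ell(f)(s)\,\xi_{k-\ell}(g)(s)$.

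The last step is to exchange summation and integration. Since the bound in~\eqref{eq:donotdependonr} is uniform in $r$ and thus uniform in $s\in [s_1,s_2]$, the Weierstrass $M$-test applies (equivalently, Fubini on counting $\times$ Lebesgue measure), permitting the swap. After exchanging, each term becomes
$\int_{s_1}^{s_2} \xi_\ell(f)(s)\,\xi_{k-\ell}(g)(s)\,\mathrm{d}s = \langle \bar\xi_\ell(f),\, \xi_{k-\ell}(g)\rangle$, yielding the claimed identity. I do not anticipate any substantive obstacle; the only care needed is making sure the double-series manipulation is absolutely convergent, and that is handed to us for free by the \emph{ueac} property of $f$ and $g$.
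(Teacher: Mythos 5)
Your proposal is correct and follows essentially the same route as the paper: the convolution identity $\eta_k(fg)=\sum_\ell \eta_\ell(f)\eta_{k-\ell}(g)$ justified by the bound in~\eqref{eq:donotdependonr}, the substitution $s=r^2$ converting $\eta_\ell(f)(r)r^{\,\cdot}$ factors into $\xi$'s, and a Fubini-type interchange of sum and integral. The only cosmetic difference is the order of operations (the paper swaps sum and integral in the $r$-variable first and then changes variables term by term, while you change variables first), which does not affect the argument.
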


\begin{proof}  
First, note that $\eta_k(fg) =    
 \sum^\infty_{\ell = -\infty} \eta_{\ell}(f)  \eta_{k-\ell}(g)$; the series is uniformly and absolutely convergent as shown in the proof of Proposition~\ref{prop:closedunderadditionandmultiplication}. 
 It then follows that  
 \begin{align}\label{eq:fourthstepdh1}
 \int^{R_2}_{R_1}\eta_k(fg)(r) r^{k+1} \mathrm{d}r & = \int^{R_2}_{R_1}  \sum^\infty_{\ell = -\infty} \big (\eta_{\ell}(f)  \eta_{k-\ell}(g) \big ) (r) r^{k+1} \mathrm{d}r \notag \\
 & = \sum^\infty_{\ell = -\infty} \int^{R_2}_{R_1} \big (\eta_{\ell}(f) \eta_{k-\ell}(g) \big ) (r) r^{k + 1} \mathrm{d}r,
\end{align}
where the last equality follows from the dominated convergence theorem. 
From~\eqref{eq:defxik}, we have that 
$$\eta_\ell(f)(r) = \xi_\ell(f)(r^2) r^{-\ell} \quad \mbox{and} \quad \eta_{k-\ell}(g)(r) = \xi_{k-\ell}(g)(r^2) r^{\ell-k}.$$ 
Using the above two expressions and changing variable $s:=r^2$, we obtain that
\begin{equation}\label{eq:laststepforgoalupdate1}
\int^{R_2}_{R_1} \big (\eta_{\ell}(f) \eta_{k-\ell}(g) \big ) (r) r^{k + 1} \mathrm{d}r = \frac{1}{2} \int_{s_1}^{s_2} \big (\xi_{\ell}(f) \xi_{k-\ell}(g) \big )(s) \mathrm{d}s. 
\end{equation} 
By~\eqref{eq:fourthstepdh1} and~\eqref{eq:laststepforgoalupdate1}, we conclude that~\eqref{eq:interchangesumandintegral} holds. 
\end{proof}

The next result is then an immediate consequence of Propositions~\ref{prop:subspacekgeneral} and~\ref{prop:goalupdate1}:  

\begin{corollary}\label{cor:nullcondition}
Let $f, g\in \mathcal{K}$. Then, $fg$ belongs to $\mathcal{K}_{\bf 1}$ if and only if  
	\begin{equation}\label{eq:gxfc}
	\sum^\infty_{\ell = -\infty}\left \langle \bar \xi_{\ell}(f), \, \xi_{k-\ell}(g) \right \rangle = 0, \quad \mbox{for all } k \ge 0.
	\end{equation}
\end{corollary}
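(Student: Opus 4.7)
The plan is to derive this statement by direct chaining of the two preceding results, since both the hypothesis $f,g\in\mathcal{K}$ and the target characterization of $\mathcal{K}_{\mathbf{1}}$ are already in forms that line up with Proposition~\ref{prop:goalupdate1}. First I would invoke Proposition~\ref{prop:closedunderadditionandmultiplication} to note that $\mathcal{K}$ is a subalgebra of $\mathrm{C}^0(A_1,\C)$, so $fg\in\mathcal{K}$, and hence it makes sense to ask whether $fg$ lies in the subset $\mathcal{K}_{\mathbf{1}}$.

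Next I would apply item~1 of Proposition~\ref{prop:subspacekgeneral} to the function $fg$: it says that $fg\in\mathcal{K}_{\mathbf{1}}$ if and only if
\begin{equation*}
\int_{R_1}^{R_2}\eta_k(fg)(r)\,r^{k+1}\,\mathrm{d}r \;=\; 0,\qquad\forall\,k\ge 0.
\end{equation*}
Now I would substitute the explicit identity provided by Proposition~\ref{prop:goalupdate1}, namely
\begin{equation*}
\int_{R_1}^{R_2}\eta_k(fg)(r)\,r^{k+1}\,\mathrm{d}r \;=\; \tfrac{1}{2}\sum_{\ell=-\infty}^{\infty}\bigl\langle \bar\xi_\ell(f),\,\xi_{k-\ell}(g)\bigr\rangle.
\end{equation*}
The vanishing of the left-hand side for every $k\ge 0$ is therefore equivalent to the vanishing of the series on the right for every $k\ge 0$, which is exactly the claimed null condition~\eqref{eq:gxfc}. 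The factor $\tfrac{1}{2}$ is harmless for the equivalence.

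There is essentially no obstacle here; the only thing worth double-checking is that Proposition~\ref{prop:goalupdate1} indeed applies to our $f$ and $g$ (which it does, since both lie in $\mathcal{K}$) and that the interchange of sum and integral implicit in item~1 of Proposition~\ref{prop:subspacekgeneral} has already been justified in its proof via Fubini. Thus the corollary is genuinely a two-line consequence and does not require any new estimates beyond those already established.
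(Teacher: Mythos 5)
Your proof is correct and follows exactly the route the paper intends: the paper presents this corollary as an immediate consequence of Propositions~\ref{prop:subspacekgeneral} and~\ref{prop:goalupdate1}, and your chaining of item~1 of the former with the identity of the latter (after noting $fg\in\mathcal{K}$ via Proposition~\ref{prop:closedunderadditionandmultiplication}) is precisely that argument, spelled out.
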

We call~\eqref{eq:gxfc} the {\em null condition}.

Now, to establish Theorem~\ref{thm:intersection}, it suffices to show that given any finite subset $\{g_1,\ldots, g_m\}$ of $\mathcal{P}$, there exists a nonzero $f_0\in \mathcal{K}$ such that~\eqref{eq:gxfc} is satisfied, with $f$ replaced by $f_0$ and $g$ replaced by $g_i$, for all $i = 1,\ldots,m$. We address this existence problem by first introducing a set of Laurent series induced by the $g_i$ (Subsection~\S\ref{ssec:laurent}) and, then, providing  a nontrivial solution (which will be used to construct~$f_0$) to a homogeneous linear equation over the ring of these Laurent series (Subsection~\S\ref{ssec:proof}).

\subsection{Connections with Laurent series}\label{ssec:laurent} 
Let $A_2 := A[\nicefrac{R^2_2}{R}, R]$ be a closed annulus in $D_0[R]$, with inner- and outer-radii being $\nicefrac{R^2_2}{R}$ and $R$, respectively. Define  
\begin{equation}\label{eq:defcalH}
\mathcal{H}:= \left \{ h: A_2 \to \C \mid h \mbox{ is holomorphic on } A_2 \right \}.
\end{equation} 
Let $\mathcal{P}$ be defined as in~\eqref{eq:definepa1}. 
We construct below a set of functions $\phi_n: \mathcal{P}\to \mathcal{H}$ for all integers $n\ge 0$. 
To this end, let $\{p_n\}^\infty_{n = 0}$ be an orthonormal basis of ${\rm L}^2([s_1, s_2], \C)$.   
We will assume that every $p_n$ is a polynomial of degree~$n$ with real coefficients. Such a basis can be obtained, for example, from the monomials $\{s^n\}_{n\ge 0}$ by applying the Gram-Schmidt process. 
Now, for each $n\ge 0$ and for any given $g\in \mathcal{P}$, we define a Laurent series $\phi_n(g)$ as follows: 
\begin{equation}\label{eq:deflaurentseries}
\phi_n(g)(z):= \sum_{k=-\infty}^\infty \langle p_n,\, \xi_{-k}(g)\rangle z^k,  
\end{equation}
where functions $\xi_k$ are defined in~\eqref{eq:defxik}. 
We now have the following fact:

\begin{proposition}\label{prop:decayalpha}
	For any $g\in \mathcal{P}$ and for any $n \ge 0$, $\phi_n(g)\in \mathcal{H}$. 
\end{proposition}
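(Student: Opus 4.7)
The plan is to prove $\phi_n(g)\in\mathcal{H}$ by showing that the Laurent series
\[
\phi_n(g)(z)=\sum_{k\in\mathbb{Z}}\langle p_n,\xi_{-k}(g)\rangle\,z^k
\]
actually converges on an open annulus that strictly contains $A_2=A[R_2^2/R,R]$. My approach has three steps: (i) derive explicit power-series expressions for $\xi_{\pm k}(g)$ on $[s_1,s_2]$ from the Maclaurin series of $g$; (ii) bound the Laurent coefficients $\langle p_n,\xi_{\pm k}(g)\rangle$ geometrically in $k$ using the regularization hypothesis; and (iii) read off the inner and outer radii of convergence of $\phi_n(g)$ from these bounds.

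For step (i), since $g\in\mathcal{P}$ I may write $g=G|_{A_1}$ for a regularized real-analytic $G:D_0[R]\to\C$ with Maclaurin series $G(\sigma)=\sum_{k,\ell\ge 0}c(k,\ell)\sigma^k\bar\sigma^\ell$. Specializing the formula~\eqref{eq:radiusseries} used in the proof of Prop.~\ref{prop:realanalytictoK} and then multiplying by $s^{\pm k/2}$ as dictated by~\eqref{eq:defxik} yields, for every $k\ge 0$, the clean identities
\[
\xi_k(g)(s)=\sum_{\ell\ge 0}c(\ell+k,\ell)\,s^{\ell+k},\qquad \xi_{-k}(g)(s)=\sum_{\ell\ge 0}c(\ell,\ell+k)\,s^{\ell}.
\]

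For step (ii), the regularization of $G$ means (in the convention already used in the proof of Prop.~\ref{prop:realanalytictoK}) that I can fix some $R'>R$ for which $\sum_{k,\ell\ge 0}|c(k,\ell)|(R')^{k+\ell}<\infty$; in particular $|c(k,\ell)|\le M(R')^{-(k+\ell)}$ for some $M>0$. Plugging this into the formulas from step (i) and summing the resulting geometric series in $\ell$ (this is legitimate since $s\le s_2=R_2^2<(R')^2$) yields uniform bounds
\[
\|\xi_{-k}(g)\|_{\mathrm{L}^\infty([s_1,s_2])}\le C_1(R')^{-k},\qquad \|\xi_{k}(g)\|_{\mathrm{L}^\infty([s_1,s_2])}\le C_2(R_2^2/R')^{k}
\]
for $k\ge 0$, with $C_1,C_2$ independent of $k$. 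Since $p_n$ is a fixed polynomial on $[s_1,s_2]$, an $\mathrm{L}^1$-$\mathrm{L}^\infty$ estimate then yields $|\langle p_n,\xi_{-k}(g)\rangle|\le C_1'(R')^{-k}$ and $|\langle p_n,\xi_{k}(g)\rangle|\le C_2'(R_2^2/R')^{k}$.

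For step (iii), the first bound forces the outer radius of convergence of $\phi_n(g)$ to be at least $R'>R$, while the second forces the inner radius to be at most $R_2^2/R'<R_2^2/R$. Hence $\phi_n(g)$ is the restriction of a function holomorphic on the open annulus $\{R_2^2/R'<|z|<R'\}$, which strictly contains $A_2$, so $\phi_n(g)\in\mathcal{H}$. The only delicate point I foresee is extracting the polydisk-type bound $|c(k,\ell)|\le M(R')^{-(k+\ell)}$ from the paper's notion of ``radius of convergence'' of a real-analytic function, and ensuring that $R'$ may be chosen strictly greater than $R$ (hence strictly greater than $R_2$) so that the geometric factor $(R_2^2/R')^k$ controls the coefficients of the negative powers of $z$; once this is settled as in the proof of Prop.~\ref{prop:realanalytictoK}, everything else is routine geometric-series bookkeeping.
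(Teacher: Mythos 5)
Your proposal is correct and follows essentially the same route as the paper: both use the regularization condition to obtain summability of $|c(k,\ell)|$ against $(R')^{k+\ell}$ for some $R'>R$, derive the explicit power series of $\xi_{\pm k}(g)$ on $[s_1,s_2]$ from the Maclaurin coefficients, bound the Laurent coefficients via an $\mathrm{L}^1$--$\mathrm{L}^\infty$ estimate against $p_n$, and conclude convergence on an open annulus strictly containing $A_2$. The only cosmetic difference is that you extract a per-coefficient geometric bound $|c(k,\ell)|\le M(R')^{-(k+\ell)}$ and sum geometric series term by term, whereas the paper keeps the full double sum and collapses it back to $\sum_{k,\ell}|c'(k,\ell)|(R+\epsilon)^{k+\ell}$; both are valid.
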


\begin{proof}
Since $g\in \mathcal{P}$, there is a regularized, real-analytic function $g': D_0[R]\to \C$ such that $g = g'|_{A_1}$. 
We express $g'$ using its Maclaurin series as follows: 
\begin{equation}\label{eq:mlseriesofg'}
g'(\sigma) = \sum_{k = 0}^\infty \sum_{\ell = 0}^\infty c'(k,\ell) \sigma^k\bar \sigma^\ell,
\end{equation} 
where $c'(k,\ell)\in\C$. By Definition~\ref{def:regcond}, the radius of convergence of the above series is greater than $R$. Thus, there exists an $\epsilon > 0$ such that 
\begin{equation}\label{eq:boundforsumofc'kl}
\sum_{k = 0}^\infty\sum_{\ell = 0}^\infty |c'(k,\ell)| (R + \epsilon)^{k +\ell} < \infty. 
\end{equation}
We show below that for the given $\epsilon$, the Laurent series $\phi_{n}(g)$, for any $n\ge 0$, converges uniformly and absolutely on the closed annulus $A'_2:= A[\nicefrac{R^2_2}{(R + \epsilon)}, R + \epsilon]$, which contains $A_2$ as a proper subset.  

First, note that for any $z\in A'_2$ and for any $k\in \mathbb{Z}$, 
\begin{equation}\label{eq:zabsk}
|z|^{k} \leq \max \left \{ (R + \epsilon)^k, \frac{R^{2k}_2}{(R + \epsilon)^k} \right \} = \frac{(R + \epsilon)^{|k|}}{R_2^{|k| - k}}.  
\end{equation}  
Also, note that 
\begin{equation}\label{eq:boundforinnerproduct}
|\langle p_n,\, \xi_{-k}(g)\rangle|\leq (s_2 - s_1) \|p_n\|_{\mathrm{L}^\infty} \|\xi_{-k}(g)\|_{\mathrm{L}^\infty}. 
\end{equation} 
Since $p_n$ is a polynomial and $\xi_{-k}(g)$ is continuous (both  are defined over $[s_1,s_2]$), we have that $\|p_n\|_{\mathrm{L}^\infty}$ and $\|\xi_{-k}(g)\|_{\mathrm{L}^\infty}$ are finite. Then, using~\eqref{eq:zabsk} and~\eqref{eq:boundforinnerproduct}, we obtain that for any $z\in A_2'$,    
\begin{equation}\label{eq:betaiknzk}
\sum^\infty_{k = -\infty} |\langle p_n,\, \xi_{-k}(g)\rangle| |z|^{k} \le  (s_2 - s_1) \|p_n \|_{\mathrm{L}^\infty} \sum^\infty_{k = -\infty} \|\xi_{-k}(g)\|_{\mathrm{L}^\infty} \frac{(R + \epsilon)^{|k|}}{R_2^{|k| - k}}.
\end{equation}

We now show that the infinite sum on the right hand side of~\eqref{eq:betaiknzk} is bounded. 
To proceed, we first obtain an upper bound for $\|\xi_{-k}(g)\|_{\mathrm{L}^\infty}$. From~\eqref{eq:defxik}, we have that $\xi_{-k}(g)(s) = \eta_{-k}(g)(\sqrt{s})s^{-\nicefrac{k}{2}}$. 
We can express $\eta_{-k}(g)$ using the coefficients $c'(\cdot,\cdot)$ in the Maclaurin series~\eqref{eq:mlseriesofg'} of~$g'$ as follows:  
$$
\eta_{-k}(g)(r) = \sum^\infty_{\ell = 0} c'\left (\ell + \nicefrac{(|k| - k)}{2},\, \ell + \nicefrac{(|k| + k)}{2} \right ) r^{2\ell + |k|}. 
$$    
It then follows that 
$$
\xi_{-k}(g)(s) = \sum^\infty_{\ell = 0} c'\left (\ell + \nicefrac{(|k| - k)}{2},\, \ell + \nicefrac{(|k| + k)}{2} \right ) s^{\ell + \frac{1}{2}(|k| - k)}. 
$$
Because $s\in [R^2_1, R^2_2]$ and $R_2 < R$, we obtain that
\begin{equation}\label{eq:infinitynormofxikg}
\|\xi_{-k}(g)\|_{\mathrm{L}^\infty} \leq \sum^\infty_{\ell = 0} \left | c'\left (\ell + \nicefrac{(|k| - k)}{2},\, \ell + \nicefrac{(|k| + k)}{2} \right ) \right | (R+\epsilon)^{2\ell} R_2^{|k|-k}. 
\end{equation}

With~\eqref{eq:infinitynormofxikg}, we can now provide an upper bound for the infinite sum on the right hand side of~\eqref{eq:betaiknzk}: 
 \begin{align*}
 \sum^\infty_{k = -\infty} \|\xi_{-k}(g)\|_{\mathrm{L}^\infty} \frac{(R + \epsilon)^{|k|}}{R_2^{|k| - k}} & \leq \sum_{k = -\infty}^\infty\sum^\infty_{\ell = 0} \left | c'\left (\ell + \nicefrac{(|k| - k)}{2},\, \ell + \nicefrac{(|k| + k)}{2} \right ) \right | (R+\epsilon)^{2\ell + |k|} \\
 & = \sum_{k = 0}^\infty\sum_{\ell = 0}^\infty |c'(k,\ell)| (R + \epsilon)^{k +\ell} < \infty, 
 \end{align*}
where the last inequality follows from~\eqref{eq:boundforsumofc'kl}. This completes the proof. 
\end{proof}

\subsection{Proof of Theorem~\ref{thm:intersection}}\label{ssec:proof} 
Let $\{g_1,\ldots, g_m\}$ be an arbitrary finite subset of~$\mathcal{P}$. 
We will first construct a nonzero element $f_0\in \mathcal{K}$ and, then, show that  $f_0\bar g_i\in \mathcal{K}_{\mathbf{1}}$ for all $i = 1,\ldots, m$. In the sequel, we will assume that $RR_1> R_2^2$, which is the hypothesis of Theorem~\ref{thm:intersection}. This hypothesis will be instrumental in showing that the function $f_0$ constructed below belongs to $\mathcal{K}$.

\subsubsection{Construction of $f_0$} Let the sets $\mathcal{P}$ and $\mathcal{H}$, and the maps $\phi_n:\mathcal{P}\to \mathcal{H}$, for $n\geq 0$, be defined as in~\eqref{eq:definepa1},~\eqref{eq:defcalH}, and~\eqref{eq:deflaurentseries}, respectively.  
We first have the following result:

\begin{lemma}\label{lem:nonzeropsi}
There exist $\psi_0,\ldots,\psi_m\in \mathcal{H}$, with at least one nonzero $\psi_i$, such that the following holds: 
\begin{equation}\label{eq:nonzeropsi}
	\sum_{n=0}^m \phi_n(\bar g_i) \psi_n = 0, \quad \mbox{for all } i = 1,\ldots, m.
\end{equation}
\end{lemma}
 
\begin{proof}
The $(m+1)$ linear homogeneous equations in~\eqref{eq:nonzeropsi} form an underdetermined system, with $m$ unknowns $\psi_1,\ldots,\psi_n$, over the ring $\mathcal{H}$. Since $\mathcal{H}$ is an integral domain, such a system has a nonzero solution. 
\end{proof}

Let the Laurent expansions of $\psi_n$, for $n = 0,\ldots,m$, be given by 
\begin{equation}\label{eq:introducealpha}
	\psi_n(z) = \sum_{k = -\infty}^\infty\alpha_{n,k} z^k.
\end{equation}
Using the coefficients $\alpha_{n,k}$ in~\eqref{eq:introducealpha}, we define functions $\rho_{0,k}:[R_1, R_2]\to\C$, for $k\in \mathbb{Z}$, as follows:
\begin{equation}\label{def:eqrhokforf0}
	\rho_{0,k}(r):= \sum^m_{n = 0} \alpha_{n,-k} p_n(r^2) r^{-k}, 
\end{equation} 
where we recall that each $p_n$ is a polynomial of degree $n$ with real coefficient, and that $\{p_n\}_{n\ge 0}$ is an orthonormal basis of $\mathrm{L}^2([s_1,s_2], \C)$. 
With $\rho_{0,k}$ defined in~\eqref{def:eqrhokforf0}, we set 
\begin{equation}\label{eq:deff0}
	f_0(r,\theta):= \sum_{k = -\infty}^\infty \rho_{0,k}(r) e^{\mathrm{i}k\theta}.
\end{equation}
We now establish the following result:

\begin{proposition}\label{prop:constructionoff0}
The function $f_0$ defined by~\eqref{eq:introducealpha},~\eqref{def:eqrhokforf0}, and~\eqref{eq:deff0}  is a nonzero element in $\mathcal{K}$.  
\end{proposition}

\begin{proof}
We first show that $f_0\in \mathcal{K}$ and, then, show that $f_0$ is nonzero. 
Since $r\in [R_1, R_2]$, we obtain from~\eqref{def:eqrhokforf0} that 
$$
\|\rho_{0,k}\|_{\mathrm{L}^\infty} \leq \left ( \max^{m}_{n = 0}\|p_n\|_{\mathrm{L}^\infty} \right ) \left (\max^2_{i=1} R_i^{-k} \right )\sum_{n = 0}^m |\alpha_{n,-k}|, \quad \mbox{for all } k\in \mathbb{Z}.  
$$ 
It follows that for any $q>1$, 
\begin{multline}\label{eq:boundedaboveforf0}
\sum^\infty_{k = -\infty} \|\rho_{0,k} \|_{\mathrm{L}^\infty} q^{|k|}  \leq 
\left ( \max^{m}_{n = 0}\|p_n\|_{\mathrm{L}^\infty} \right ) \sum^{m}_{n = 0}\sum^\infty_{k = 0} \left ( |\alpha_{n,k}| (q R_2)^k + |\alpha_{n,-k}| \left (\nicefrac{q}{R_1} \right )^{k}  \right ).
\end{multline}
We exhibit below a $q>1$ such that the right hand side of~\eqref{eq:boundedaboveforf0} is bounded.

Since each $\psi_n(z) =  \sum^\infty_{k = -\infty} \alpha_{n,k} z^{k}$, for $n = 0,\ldots, m$, is holomorphic on $A_2 = A[\nicefrac{R^2_2}{R}, R]$, it is absolutely convergent on the inner- and outer-circles of the annulus. 
	It follows that for all $n = 0  ,\ldots, m$ and for all $k\ge 0$,   
\begin{equation}\label{eq:boundednessofalphasequence}
\sum^\infty_{k = 0} |\alpha_{n,k}| \left (\nicefrac{R}{R^2_2} \right )^k < \infty  \quad \mbox{and} \quad
\sum^\infty_{k = 0} |\alpha_{n,-k}| R^k < \infty.\\
\end{equation} 
Now, set $q := \nicefrac{R R_1}{R^2_2}$. By the hypothesis, $RR_1 > R_2^2$, so $q > 1$. It follows that   
\begin{align*}
\sum^\infty_{k = 0} \left ( |\alpha_{n,k}| (q R_2)^k + |\alpha_{n,-k}| \left (\nicefrac{q}{R_1} \right )^{k}  \right ) & =  
\sum^\infty_{k = 0} \left ( |\alpha_{n,k}| \left (\nicefrac{RR_1}{R_2} \right )^k + |\alpha_{n,-k}| \left (\nicefrac{R}{R_2^2} \right )^k \right ) \\ 
& \leq \sum^\infty_{k = 0} \left ( |\alpha_{n,k}| R^k + |\alpha_{n,-k}| \left (\nicefrac{R}{R_2^2} \right )^k \right ) < \infty,
\end{align*}
where the first inequality follows from the fact that $R_1 < R_2$ (and, hence, $\nicefrac{RR_1}{R_2} < R$) and the last inequality follows from~\eqref{eq:boundednessofalphasequence}. Because the above holds for all $n = 0,\ldots, m$, the right hand side of~\eqref{eq:boundedaboveforf0} is bounded above. 

It now remains to show that $f_0$ is nonzero. Note that 
$$
\|f_0\|^2_{\mathrm{L}^2} = 2\pi\sum_{k = -\infty}^\infty \int_{R_1}^{R_2}|\rho_{0,k}(r)|^2 r\mathrm{d}r.
$$ 
Thus, it suffices to show that there exists at least one nonzero $\rho_{0,k}$ for some~$k\in \mathbb{Z}$. By Lemma~\ref{lem:nonzeropsi}, there exists an $\psi_{n'}$, for some $n' \in \{0,\ldots, m\}$, such that $\psi_{n'} \ne 0$. It follows from the Laurent expansion~\eqref{eq:introducealpha} that there exists an $\alpha_{n',k'}$, for some $k'\in \mathbb{Z}$, such that $\alpha_{n',k'}\neq 0$. We claim that $\rho_{0,-k'}\neq 0$. To see this, note that
$$
\int_{R_1}^{R_2} \rho_{0,-k'}(r) p_{n'}(r^2)r^{1-k'}\mathrm{d}r = \sum_{n = 0}^m\alpha_{n,k'} \langle p_n, p_{n'}\rangle = \alpha_{n',k'} \neq 0.  
$$
This completes the proof. 
\end{proof}

\subsubsection{Proof that $f_0\bar g_i\in \mathcal{K}_{\mathbf{1}}$} 
By Corollary~\ref{cor:nullcondition}, to show that $f_0\bar g_i\in \mathcal{K}_{\bf 1}$, it suffices to establish the following result: 

\begin{proposition}\label{prop:connectiontoorthognality}
	Let $f_0\in \mathcal{K}$ be defined by~\eqref{eq:introducealpha},~\eqref{def:eqrhokforf0}, and~\eqref{eq:deff0}, and let $\xi_k$ be defined as in~\eqref{eq:defxik}.  Then, for all $i = 1,\ldots, m$, 
	\begin{equation}\label{eq:gxfc1}
	\sum_{\ell = - \infty}^\infty\langle \bar \xi_\ell(f_0), \xi_{k - \ell}(\bar g_i) \rangle = 0, \quad \mbox{for all } k\ge 0.    
	\end{equation} 		
\end{proposition}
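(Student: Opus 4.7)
The plan is to recognize the sum in~\eqref{eq:gxfc1} as the coefficient of $z^{-k}$ in the Laurent product $\sum_{n=0}^m \phi_n(\bar g_i)\psi_n$, which is identically zero by Lemma~\ref{lem:nonzeropsi}. The key observation driving everything is that the factor $r^{-k}$ in the definition~\eqref{def:eqrhokforf0} of the radius components $\rho_k$ of $f_0$ is engineered to cancel exactly the $s^{k/2}$ that appears in~\eqref{eq:defxik}.

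First I would compute $\xi_\ell(f_0)$ in closed form. Because $\eta_\ell(f_0)=\rho_\ell$, the definition gives $\xi_\ell(f_0)(s)=\rho_\ell(\sqrt s)\,s^{\ell/2}$; substituting~\eqref{def:eqrhokforf0}, the factor $(\sqrt s)^{-\ell}=s^{-\ell/2}$ cancels the $s^{\ell/2}$ and leaves
$$
\xi_\ell(f_0)(s)=\sum_{n=0}^m \alpha_{n,-\ell}\,p_n(s).
$$
Since each $p_n$ has real coefficients and $s\in[s_1,s_2]\subset\R$, conjugating yields $\bar\xi_\ell(f_0)(s)=\sum_n\bar\alpha_{n,-\ell}\,p_n(s)$. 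Plugging this into the inner product and using $\bar p_n=p_n$ gives
$$
\langle \bar\xi_\ell(f_0),\,\xi_{k-\ell}(\bar g_i)\rangle=\sum_{n=0}^m \alpha_{n,-\ell}\,\langle p_n,\,\xi_{k-\ell}(\bar g_i)\rangle.
$$

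Second, I would sum over $\ell$ and reinterpret the result via Laurent coefficients. By~\eqref{eq:deflaurentseries}, $\langle p_n,\,\xi_{k-\ell}(\bar g_i)\rangle$ is the coefficient of $z^{\ell-k}$ in $\phi_n(\bar g_i)(z)$, while $\alpha_{n,-\ell}$ is the coefficient of $z^{-\ell}$ in $\psi_n(z)$. Since $\phi_n(\bar g_i)\in\mathcal{H}$ (Prop.~\ref{prop:decayalpha}) and $\psi_n\in\mathcal{H}$ (Lemma~\ref{lem:nonzeropsi}), both Laurent series converge absolutely on a common open annulus containing $A_2$, so the Cauchy product formula applies. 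Swapping the finite sum over $n$ with the sum over $\ell$, the left-hand side of~\eqref{eq:gxfc1} equals precisely the coefficient of $z^{-k}$ in the product $\sum_{n=0}^m\phi_n(\bar g_i)\psi_n$.

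Finally, Lemma~\ref{lem:nonzeropsi} supplies $\sum_{n=0}^m \phi_n(\bar g_i)\,\psi_n\equiv 0$ on $A_2$, so every Laurent coefficient of this product vanishes; in particular the coefficient of $z^{-k}$ vanishes for each $k\ge 0$, which is exactly~\eqref{eq:gxfc1}. I do not anticipate a serious obstacle: the identification is purely formal once the cancellation in the first step is seen, and the only care required is index bookkeeping when passing from the double sum in $(n,\ell)$ to a Cauchy convolution, together with the routine absolute-convergence justification on a neighborhood of $A_2$.
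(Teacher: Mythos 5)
Your proposal is correct and follows essentially the same route as the paper: compute $\xi_\ell(f_0)=\sum_{n=0}^m\alpha_{n,-\ell}p_n$ from the engineered cancellation in~\eqref{def:eqrhokforf0}, reduce the inner products to $\sum_n\alpha_{n,-\ell}\langle p_n,\xi_{k-\ell}(\bar g_i)\rangle$ via orthonormality of $\{p_n\}$, and identify the resulting double sum with the $z^{-k}$ Laurent coefficient of $\sum_n\phi_n(\bar g_i)\psi_n$, which vanishes by Lemma~\ref{lem:nonzeropsi}. The only difference is that the paper spells out the absolute-summability estimates justifying the interchange of sums, where you invoke the standard Cauchy-product fact for Laurent series absolutely convergent on a common annulus; both are adequate.
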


\begin{proof}
It should be clear from Proposition~\ref{prop:constructionoff0} that the radius components of $f_0$ are given by $\eta_\ell(f_0) = \rho_{0,\ell}$, where $\rho_{0,\ell}$ for $\ell\in \mathbb{Z}$ are defined in~\eqref{def:eqrhokforf0}. Since $\xi_\ell(f_0)(s) = \eta_\ell(f_0)(\sqrt{s})s^{\ell/2}$, we obtain that 
\begin{equation}\label{eq:xiellf0}
\xi_\ell(f_0)(s) = \sum_{n = 0}^m \alpha_{n,-\ell}p_n(s).
\end{equation}
Next, for convenience,  we introduce for each $\bar g_i$ the following complex numbers: 
\begin{equation}\label{eq:introducebetas}
\beta_{n,k}(\bar g_i):= \langle p_n, \xi_{-k}(\bar g_i)\rangle \quad \mbox{for all } n\ge 0 \mbox{ and for all } k\in \mathbb{Z}.
\end{equation} 
By~\eqref{eq:deflaurentseries}, these numbers are the coefficients in the Laurent expansions of $\phi_n(\bar g_i)$, i.e., 
\begin{equation}\label{eq:laurentexpansionofgi}
\phi_n(\bar g_i) = \sum_{k = -\infty}^\infty\beta_{n,k}(\bar g_i)z^k.
\end{equation} 
Since $\{p_n\}_{n\ge 0}$ is an orthonormal basis of $\mathrm{L}^2([s_1,s_2],\C)$, 
we can thus express $\xi_{k-\ell}(\bar g_i)$ in the $\mathrm{L}^2$-sense as follows:
\begin{equation}\label{eq:xikellbargi}
\xi_{k-\ell}(\bar g_i) = \sum_{n = 0}^\infty \beta_{n,\ell-k}(\bar g_i)p_n.
\end{equation}
Using the two expressions~\eqref{eq:xiellf0} and~\eqref{eq:xikellbargi} and, again, the fact that $\{p_n\}_{n\ge 0}$ is an orthonormal basis, we obtain that
\begin{equation}\label{eq:alphabetais0}
\langle \bar \xi_\ell(f_0), \xi_{k - \ell}(\bar g_i) \rangle = \sum_{n = 0}^m\alpha_{n,-\ell} \beta_{n,\ell-k}(\bar g_i).
\end{equation}
Thus,~\eqref{eq:gxfc1} holds if and only if 
\begin{equation}\label{eq:gxfc2}
\sum_{\ell = -\infty}^\infty \sum_{n = 0}^m \alpha_{n,-\ell} \beta_{n,\ell-k}(\bar g_i) = 0, \quad \mbox{for all } k\ge 0.
\end{equation}

Now, consider the Laurent expansions of $\phi_n(\bar g_i)\psi_n $, for $n = 0,\ldots, m$:
\begin{equation}
	\left (\phi_n(\bar g_i)\psi_n  \right )(z) = \sum_{k = -\infty}^\infty \gamma_{n,k}(\bar g_i) z^k.
\end{equation}
The Laurent expansions of  $\psi_{n}$ and of $\phi_{n}(\bar g_i)$ are given by~\eqref{eq:introducealpha} and~\eqref{eq:laurentexpansionofgi}, respectively. It follows that the coefficients~$\gamma_{n,k}(\bar g_i)$ are given by
$$
\gamma_{n,k}(\bar g_i) = \sum_{\ell = -\infty}^\infty\alpha_{n,-\ell} \beta_{n,\ell+k}(\bar g_i), \quad \mbox{for all } k\in \mathbb{Z} \mbox{ and for all } n = 0,\ldots, m.  
$$ 
Thus, to establish~\eqref{eq:gxfc2}, it suffices to show that 
$\sum_{n = 0}^m \gamma_{n,k}(\bar g_i) = 0$ for all $k \leq 0$. 
But, this directly follows from Lemma~\ref{lem:nonzeropsi}; indeed, by~\eqref{eq:nonzeropsi}, we obtain that 
$$
\left (\sum_{n = 0}^m\phi_n(\bar g_i)\psi_n \right )(z) = \sum_{k = -\infty}^\infty \left ( \sum_{n = 0}^m \gamma_{n,k}(\bar g_i) \right ) z^k = 0.
$$
If a holomorphic function is identically zero, then all of its coefficients in the associated Laurent expansion vanish. This completes the proof.     
\end{proof}

\section{Reductions and Translations to Normal Forms}\label{sec:proofofmain}
In this section, we assume that $\Sigma$ contains an open set $U$ in $\R^d$, with $d\geq 2$, and prove Theorem~\ref{thm:main}. Following the results in Section~\S\ref{sec:preliminaryresult}, we can assume, without loss of generality, that $A:\Sigma\to \C^{n\times n}$ and $B:\Sigma\to \C^{n\times m}$ are real-analytic, matrix-valued functions. 
We will show that the following complex linear ensemble system: 
\begin{equation}\label{eq:modelrep}
\dot x(t, \sigma) = A(\sigma) x(t, \sigma) + B(\sigma)u(t), \quad \mbox{for all } \sigma\in \Sigma,
\end{equation}
is not $\mathrm{L}^2$-controllable.  
The proof relies on the fact that any such system~\eqref{eq:modelrep} can be turned into a normal form~\eqref{eq:normalform} after a sequence of reductions and translations.

\subsection{Reduction on state space}\label{ssec:reductiononstatespace}

In this subsection, we perform reduction on state spaces of individual systems. The process takes two steps: In the first step, we find a closed, $d$-dimensional ball $\Sigma'$ as a subset of $\Sigma$ such that a branch of eigenvalues of the $A$-matrix and its corresponding eigenspaces are real-analytic over $\Sigma'$. Thanks to Lemma~\ref{lem:embedding}, to show that the original system is not controllable, we only need to show that the subensemble-$\Sigma'$ is not. In the second step, we translate the subensemble to a system whose $A$-matrix is block upper triangular via a similarity transformation. We then make use of such a structure and iteratively reduce the dimensions of individual systems. The iteration terminates in finite steps and we end up with a real-analytic {\em scalar} ensemble over the closed ball $\Sigma'$.          
 
\subsubsection{Local real-analyticity of eigenvalues and eigenspaces}    
Let $\eig(\sigma)$ be the set of eigenvalues of $A(\sigma)$. For a given $\lambda\in \eig(\sigma)$, let $m_a(\lambda)$ be the algebraic multiplicity of $\lambda$. 
Let $U$ be the open set in $\R^d$ inside $\Sigma$. 
We then let  
$$\lambda_a \in \argmin \{m_a(\lambda) \mid \lambda \in \eig(\sigma) \mbox{ and } \sigma\in U \}.$$ 
Let $k_a:= m_a(\lambda_a)$ and $\sigma_a\in U$ be such that $\lambda_a\in \eig(\sigma_a)$. Let $U_a$ be an open neighborhood of $\sigma_a$ inside $U$ and $\lambda: U_a \to \C$ be a continuous function such that $\lambda(\sigma)\in \eig(\sigma)$ for all $\sigma\in U_a$ and $\lambda(\sigma_a) = \lambda_a$  (continuity of $\lambda$ can be established via the use of Rouch\'e's Theorem). 
Since $m_a(\lambda(\sigma))$ is locally nonincreasing in~$\sigma$ and attains the minimum value at $\sigma_a$, one can shrink $U_a$, if necessary, so that  
$m_a(\lambda(\sigma)) = k_a$ for all $\sigma\in U_a$. 
We have the following result:  

\begin{lemma}\label{lem:branchofeigenvalues}
The function $\lambda: U_a \to \C$ is real-analytic. 	
\end{lemma}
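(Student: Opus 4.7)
My plan is to reduce the lemma to an application of the (real-)analytic implicit function theorem to the characteristic polynomial of $A$. Since $A$ is real-analytic on $\Sigma$, the coefficients of
\[
p(\sigma,\mu) := \det(\mu I - A(\sigma))
\]
are polynomials in the entries of $A(\sigma)$, hence real-analytic in $\sigma$; and $p$ is a polynomial (of degree $n$) in $\mu$. Thus $p$ is jointly real-analytic in $(\sigma,\mu)\in U_a\times\C$. The construction preceding the lemma guarantees that $\lambda(\sigma)$ is a root of $p(\sigma,\cdot)$ of algebraic multiplicity exactly $k_a$ for every $\sigma\in U_a$. Equivalently,
\[
\partial_\mu^j p(\sigma,\lambda(\sigma)) = 0 \quad \text{for } j=0,1,\dots,k_a-1, \qquad \partial_\mu^{k_a} p(\sigma,\lambda(\sigma)) \neq 0, \qquad \forall\,\sigma\in U_a.
\]

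Next I would introduce the auxiliary function $g(\sigma,\mu) := \partial_\mu^{k_a-1} p(\sigma,\mu)$, which is real-analytic in $(\sigma,\mu)$. Fix an arbitrary $\sigma_0\in U_a$. By the two displays above, $g(\sigma_0,\lambda(\sigma_0)) = 0$ while $\partial_\mu g(\sigma_0,\lambda(\sigma_0)) = \partial_\mu^{k_a} p(\sigma_0,\lambda(\sigma_0)) \neq 0$. The real-analytic implicit function theorem then yields an open neighborhood $V\subseteq U_a$ of $\sigma_0$ and a unique real-analytic $\tilde\lambda\colon V\to\C$ with $\tilde\lambda(\sigma_0) = \lambda(\sigma_0)$ that solves $g(\sigma,\tilde\lambda(\sigma)) = 0$ on $V$. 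Since $\lambda$ is continuous on $U_a$ and also satisfies $g(\sigma,\lambda(\sigma)) = 0$, after shrinking $V$ so that the graph of $\lambda|_V$ falls inside the uniqueness neighborhood supplied by the implicit function theorem, we conclude $\lambda = \tilde\lambda$ on $V$. Therefore $\lambda$ is real-analytic at every point $\sigma_0\in U_a$, hence real-analytic on all of $U_a$.

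The only nontrivial ingredient is the nondegeneracy condition $\partial_\mu^{k_a} p(\sigma,\lambda(\sigma))\neq 0$, which is precisely where the hypothesis that the algebraic multiplicity is \emph{exactly} $k_a$ throughout $U_a$ is used: factoring $p(\sigma,\mu) = (\mu-\lambda(\sigma))^{k_a} h(\sigma,\mu)$ with $h(\sigma,\lambda(\sigma))\neq 0$ gives $\partial_\mu^{k_a} p(\sigma,\lambda(\sigma)) = k_a!\,h(\sigma,\lambda(\sigma))\neq 0$. I expect this to be the main conceptual (as opposed to technical) step: once it is in place, the proof is a standard invocation of the analytic implicit function theorem and a continuity-plus-uniqueness argument; without constancy of the multiplicity, $\partial_\mu^{k_a}p$ could vanish at isolated $\sigma$ and analyticity of $\lambda$ could fail there.
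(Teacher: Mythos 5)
Your proof is correct and takes essentially the same route as the paper: the paper applies the analytic implicit function theorem to $p(\sigma,t) := \frac{\partial^{k_a-1}}{\partial t^{k_a-1}}\det(tI_n - A(\sigma))$, which is exactly your auxiliary function $g$. You merely spell out the nondegeneracy computation and the continuity-plus-uniqueness identification of $\lambda$ with the implicit solution, steps the paper leaves implicit.
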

 
\begin{proof}
	Consider the function $h: U_a\times \C\to \C$ defined as follows: $$h(\sigma,t):= \frac{\partial^{k_a - 1}}{\partial t^{k_a -1}} \det(tI_n - A(\sigma)).$$  
	It should be clear that $h(\sigma, t)$ is a polynomial in $t$ for any fixed $\sigma$. By construction of $k_a$ and $U_a$, this polynomial has a simple root $\lambda(\sigma)$ and, hence,    
	$\frac{\partial }{\partial t} h(\sigma, \lambda(\sigma))\neq 0$.   
	The lemma then follows from the analytic implicit function theorem.
\end{proof}
 
We fix the branch of eigenvalues $\lambda: U_a \to \C$ constructed above.     
Let $m_g(\lambda(\sigma))$ be the geometric multiplicity of~$\lambda(\sigma)$. 
Let 
$$\sigma_g\in \argmin\{m_g(\lambda(\sigma))\mid \sigma\in U_a \} \quad \mbox{and} \quad k_g:= m_g(\lambda(\sigma_g)).$$  
Similarly, since $m_g(\lambda(\sigma))$ is locally nonincreasing and attains the minimum value at~$\sigma_g$, there is an open neighborhood  $U_g$ of $\sigma_g$ inside $U_a$ such that $m_g(\lambda(\sigma)) = k_g$ for all $\sigma\in U_g$.    
Let $\operatorname{GL}(n,\C)$ be the set of $n\times n$ invertible complex-valued matrices. The following fact is well-known (see~\cite{grasse2004vector} and references therein):

\begin{lemma}\label{lem:pap}
If $U_g$ is sufficiently small, then there is a real-analytic function $P: U_g \to \operatorname{GL}(n,\C)$ such that  
\begin{equation}\label{eq:blockuppertriangular}
P^{-1}AP = 
\begin{bmatrix}
	\lambda I_{k_g} & A'_{12} \\
	0 & A'_{22}	
\end{bmatrix},
\end{equation} 
where $I_{k_g}$ is the $k_g \times k_g$ identity matrix.
\end{lemma}

Note that if $k_g = n$, then $P^{-1} A P$ is simply $\lambda I_n$.

\subsubsection{Block upper triangular structures} 
Let $\Sigma'$ be a closed $d$-dimensional ball inside the open set $U_g$. For ease of notation,  we will now treat $A$, $B$, and $P$ given in Lemma~\ref{lem:pap} as  matrix-valued functions on $\Sigma'$. 
Define $A': \Sigma' \to \C^{n\times n}$ and $B': \Sigma'\to \C^{n\times m}$ as 
$$A':= P^{-1} A P \quad \mbox{and} \quad B':= P^{-1} B.$$   
Consider the linear ensemble system given by the pair $(A',B')$:
\begin{equation}\label{eq:systemsimilarity}
\dot x'(t,\sigma) = A'(\sigma) x'(t,\sigma) + B'(\sigma) u'(t), \quad \mbox{for all } \sigma\in\Sigma'.
\end{equation}
By construction, this system is obtained by first restricting the original system $(A, B)$ to $\Sigma'$ and, then, applying the similarity transformation via the  matrix-valued map~$P$. It should be clear that if $(A',B')$ is not controllable, then neither is $(A, B)$.

By Lemma~\ref{lem:pap}, $A'$ is block upper triangular. 
We will now make use of such structure to perform reduction on system $(A',B')$. Consider two cases: 

{\em Case 1: $k_g = n$.} In this case, $A' = \lambda I_n$. It follows that the dynamics of the $n$ entries $x'_i(t,\sigma)$ of system~\eqref{eq:systemsimilarity}, for $i = 1,\ldots, n$, are given by
\begin{equation*}\label{eq:scalll}
\dot x'_i(t, \sigma) = \lambda(\sigma) x'_i(t, \sigma) + b'_i(\sigma)u'(t), \quad \mbox{for all } \sigma \in \Sigma',
\end{equation*}
where $b'_i$ is the $i$th row of $B'$. Note that $(A',B')$ is controllable only if every $(\lambda, b'_i)$ is.  

{\em Case 2: $k_g < n$.} In accordance with~\eqref{eq:blockuppertriangular}, we partition $B' = [B'_1; B'_2]$ and $x'(t,\sigma) = [x'_1(t,\sigma); x'_2(t,\sigma)]$.  
Then, the dynamics of $x'_2(t,\sigma)$ are given by   
\begin{equation}\label{eq:scalllll}
\dot x'_2(t, \sigma) = A'_{22}(\sigma) x'_2(t,\sigma) + B'_2(\sigma) u'(t), \quad \forall\sigma\in \Sigma'.
\end{equation}  	
It should be clear that $(A',B')$ is controllable only if $(A'_{22},B'_2)$ is. 
System~\eqref{eq:scalllll} is not necessarily a scalar ensemble, yet the dimension of the state space of each individual system has been reduced from~$n$ to $(n - k_g)$. Iterating this reduction process in a finite number of times,  we will end up with {\em Case~1}. 

The reduction on state space is now complete. It remains to show that scalar ensemble systems over closed $d$-dimensional balls are not controllable.

  \subsection{Reduction on parameterization space}\label{ssec:reductiononparaspace}
  In this subsection, we let $\Sigma$ be a closed $d$-dimensional ball in~$\R^d$, and consider the following scalar ensemble system: 
\begin{equation}\label{eq:scalarensemble}
  \dot x(t, \sigma) = a(\sigma) x(t, \sigma) + b(\sigma)u(t), \quad \mbox{for all } \sigma \in \Sigma,  
\end{equation}
where $a:\Sigma \to \C$ and $b:\Sigma\to \C^{1\times m}$ are real-analytic functions.   
  Let $\operatorname{Re}a$ and $\operatorname{Im}a$ be the real and imaginary parts of~$a$, respectively. Define $J:\Sigma \to \R^{2\times d}$ as follows:  
  \begin{equation}\label{eq:defJacobian}
  J(\sigma):= 
  \frac{\partial}{\partial\sigma}
  \begin{bmatrix}
  	 \operatorname{Re}\, a(\sigma) \\ 
  	 \operatorname{Im}\, a(\sigma)   
  \end{bmatrix}. 
  \end{equation}
  Further, we let $$\sigma_J \in \argmax \{\operatorname{rank} J(\sigma) \mid \sigma \in \Sigma\} \quad \mbox{and} \quad k_J := \operatorname{rank} J(\sigma_J).$$   
  We establish below the following fact:

  \begin{proposition}\label{prop:jacobian}
  	If system~\eqref{eq:scalarensemble} is $\mathrm{L}^2$-controllable, then $k_J = d$.
  \end{proposition}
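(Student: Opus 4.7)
The plan is to prove the contrapositive: assuming $k_J < d$, I will construct a nonzero $f \in \mathrm{L}^2(\Sigma, \C)$ that is orthogonal to $\mathcal{L}(a, b)$, and conclude via Lemma~\ref{lem:controllablesubspace}. The key idea is that a rank-deficient Jacobian $J$ locally decouples the $\sigma$-dependence of $a$ from $d - k_J \ge 1$ of the coordinates, so that on a product slice the span of $\{a^k b_j\}$ is trapped in an at-most-$m$-dimensional subspace of the fiber.

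Since $\operatorname{rank} J$ is lower semicontinuous and attains its maximum $k_J$ at $\sigma_J$, there is an open neighborhood $V$ of $\sigma_J$ on which $J$ has constant rank $k_J$. Applying the real-analytic constant-rank theorem to $(\operatorname{re} a, \operatorname{im} a) : V \to \R^2$, I obtain, after possibly shrinking $V$, a real-analytic diffeomorphism $\phi : V_1 \times V_2 \to V$, with $V_1 \subset \R^{k_J}$ and $V_2 \subset \R^{d - k_J}$ open balls, such that $\tilde a := a \circ \phi$ depends only on $\sigma_1 \in V_1$. Choose a closed $d$-dimensional ball (or rectangle) $\Sigma' = \Sigma'_1 \times \Sigma'_2 \subset V_1 \times V_2$; by Lemma~\ref{lem:embedding} applied to the embedding $\rho := \phi|_{\Sigma'}$, it suffices to show that the pullback $(\tilde a, \tilde b)$ on $\Sigma'$, with $\tilde b := b \circ \phi$, is not $\mathrm{L}^2$-controllable.

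By Fubini's theorem, for any $f \in \mathrm{L}^2(\Sigma', \C)$,
\[
\langle f, \tilde a^k \tilde b_j \rangle_{\Sigma'} = \int_{\Sigma'_1} \tilde a(\sigma_1)^k \left( \int_{\Sigma'_2} \overline{f(\sigma_1, \sigma_2)}\, \tilde b_j(\sigma_1, \sigma_2)\, \mathrm{d} \sigma_2 \right) \mathrm{d} \sigma_1,
\]
so it is enough to exhibit a nonzero $f \in \mathrm{L}^2(\Sigma', \C)$ whose slice $f(\sigma_1, \cdot)$ is orthogonal, for a.e.\ $\sigma_1$, to the finite-dimensional subspace $W(\sigma_1) := \operatorname{span}\{\tilde b_j(\sigma_1, \cdot)\}_{j=1}^m$ of $\mathrm{L}^2(\Sigma'_2, \C)$. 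I fix $\phi_0 := \mathbf{1}_B$ for some proper sub-ball $B \subsetneq \Sigma'_2$, and let $f(\sigma_1, \cdot)$ be the orthogonal projection of $\phi_0$ onto $W(\sigma_1)^{\perp}$. Because no finite linear combination of the real-analytic functions $\tilde b_j(\sigma_1, \cdot)$ can agree a.e.\ with the non-analytic $\phi_0$, this projection is nowhere zero, and the standard contraction bound $\|f(\sigma_1, \cdot)\|_{\mathrm{L}^2} \le \|\phi_0\|_{\mathrm{L}^2}$ places $f$ in $\mathrm{L}^2(\Sigma', \C)$.

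The main technical point is verifying measurability of $f$ in $(\sigma_1, \sigma_2)$; this follows from the real-analytic dependence on $\sigma_1$ of the Gram matrix $G(\sigma_1) = \bigl(\langle \tilde b_k(\sigma_1, \cdot), \tilde b_j(\sigma_1, \cdot)\rangle_{\Sigma'_2}\bigr)_{j,k}$. The remaining wrinkle is that $G(\sigma_1)$ may be singular at some $\sigma_1$, which I handle by first extracting, using real-analyticity of $\det G$, a maximal linearly independent sub-collection $\tilde b_{j_1}, \ldots, \tilde b_{j_{r^*}}$ whose Gram matrix is nonsingular on a dense open subset of $\Sigma'_1$; the remaining $\tilde b_j$ are linear combinations of this sub-collection on that open set, so orthogonality to the sub-collection suffices there, and hence a.e.\ on $\Sigma'_1$.
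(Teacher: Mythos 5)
Your proposal is correct and follows essentially the same route as the paper: flatten $a$ via the constant-rank theorem so it depends only on the $k_J$ "base" coordinates, pull back via Lemma~\ref{lem:embedding}, and build a nonzero $\mathrm{L}^2$ function by slice-wise orthogonal projection onto the complement of $\operatorname{span}\{\tilde b_j(\sigma_1,\cdot)\}$, killing all $\langle f, \tilde a^k\tilde b_j\rangle$ by Fubini. The only differences are cosmetic: you exhibit an explicit witness $\mathbf{1}_B$ not in any fiber span (the paper just takes some continuous $f$ with one slice outside the span) and you handle Gram-matrix degeneracy on a measure-zero set, whereas the paper shrinks the neighborhood so that a fixed subcollection of the $b_j'$ stays linearly independent everywhere.
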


  Note that by construction~\eqref{eq:defJacobian}, the rank of $J(\sigma)$ is at most~$2$. Thus, a consequence of the result is that system~\eqref{eq:scalarensemble} is not controllable if $d > 2$. 
  
We establish below Proposition~\ref{prop:jacobian}. 
Because $\operatorname{rank}J(\sigma)$ is locally nondecreasing in~$\sigma$, there is an open neighborhood $U_J$ of $\sigma_J$ in $\Sigma$ such that $\operatorname{rank} J(\sigma) = k_J$ for all $\sigma\in U_J$. We can assume, without loss of generality, that $\sigma_J$ and $U_J$ are in the interior of $\Sigma$. 
The following result is a consequence of the Rank Theorem~\cite[Theorem 4.12]{lee2012introduction}:

\begin{lemma}\label{lem:ranktheorem}
	There exist a closed neighborhood $\bar U_F$ of $\sigma_J$ in $U_J$ and a  $\mathrm{C}^1$-diffeomorphism: $$\varphi: [-\epsilon_1, \epsilon_1]^{d - k_J}\times [-\epsilon_2, \epsilon_2]^{k_J}\to \bar U_F,$$ with $\epsilon_1, \epsilon_2 > 0$, such that for every $\mu_2\in [-\epsilon_2, \epsilon_2]^{k_J}$, the map $a\cdot \varphi$ is constant on the following set: $$S_{\mu_2} := [-\epsilon_1, \epsilon_1]^{d - k_J} \times \{\mu_2\}.$$ 
\end{lemma}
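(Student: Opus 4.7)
The plan is to apply the Rank Theorem, exactly the cited \cite[Theorem 4.12]{lee2012introduction}, directly to the map $a$. By hypothesis the Jacobian $J$ has constant rank $k_J$ on the open neighborhood $U_J \subset \R^d$ of $\sigma_J$, and $\sigma_J$ is taken to lie in the interior of $\Sigma$, so one has an honest open neighborhood in $\R^d$ in which to work without having to worry about boundary.

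First, I would invoke the Rank Theorem to produce smooth diffeomorphisms $\varphi_0: V \to W$, where $V$ is an open neighborhood of $0 \in \R^d$ and $W \subseteq U_J$ is an open neighborhood of $\sigma_J$ with $\varphi_0(0) = \sigma_J$, together with a local diffeomorphism $\psi_0$ of $\R^2$ centered at $a(\sigma_J)$, such that $\psi_0 \circ a \circ \varphi_0$ agrees on $V$ with a linear projection of rank $k_J$. By permuting the $d$ source coordinates, I arrange that the surviving projection coordinates are the \emph{last} $k_J$ ones; concretely, $(\psi_0 \circ a \circ \varphi_0)(x_1,\ldots,x_d)$ depends only on $(x_{d-k_J+1},\ldots,x_d)$.

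Second, since $\psi_0$ is a diffeomorphism, $a \circ \varphi_0$ itself depends only on $(x_{d-k_J+1},\ldots,x_d)$. Writing a point of $V$ as $(\mu_1, \mu_2)$ with $\mu_1 \in \R^{d-k_J}$ and $\mu_2 \in \R^{k_J}$, this means $a \circ \varphi_0(\mu_1, \mu_2)$ is independent of $\mu_1$. Choosing $\epsilon_1, \epsilon_2 > 0$ small enough that $[-\epsilon_1,\epsilon_1]^{d-k_J} \times [-\epsilon_2,\epsilon_2]^{k_J} \subset V$, I then set $\varphi$ to be the restriction of $\varphi_0$ to this product of closed cubes and let $\bar U_F := \varphi\bigl([-\epsilon_1,\epsilon_1]^{d-k_J} \times [-\epsilon_2,\epsilon_2]^{k_J}\bigr)$ be its image. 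Then $\bar U_F$ is a closed neighborhood of $\sigma_J$ in $U_J$, $\varphi$ is a smooth diffeomorphism onto $\bar U_F$, and the preceding observation gives that $a \circ \varphi$ is constant on each slice $S_{\mu_2}$.

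There is essentially no hard step here: the lemma is a direct appeal to the Rank Theorem, and the only items to check carefully are (i) that the interior hypothesis on $\sigma_J$ indeed permits a genuinely $d$-dimensional open neighborhood, and (ii) the cosmetic permutation of coordinates so that the ``fiber'' directions appear as the first factor $\mu_1$ and the rank-$k_J$ ``image'' directions appear as $\mu_2$, matching the form stated.
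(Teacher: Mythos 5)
Your proposal is correct and follows exactly the route the paper intends: the paper states this lemma as a direct consequence of the Rank Theorem \cite[Theorem 4.12]{lee2012introduction} without writing out a proof, and your argument simply supplies the routine details (constant rank on $U_J$, composing away the target chart $\psi_0$, permuting coordinates, and restricting to a product of closed cubes). Nothing is missing.
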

  	  
Each $S_{\mu_2}$ will be referred to as a slice.           
For ease of notation, we let $\bar V_F:= [-\epsilon_1, \epsilon_1]^{d - k_J}\times [-\epsilon_2, \epsilon_2]^{k_J}$ and, for clarification of presentation, we use letter $\mu$ to denote a point in the closed rectangle $\bar V_F$.             
Let $a': \bar V_F \to \C$ and $b': \bar V_F\to \C^{1\times m}$ be defined as follows:    
    $$
    a': = a \cdot \varphi \quad \mbox{and} \quad b':= b\cdot \varphi.
    $$  
    Then, the pullback of system~\eqref{eq:scalarensemble} by $\varphi$ is given by
   \begin{equation}\label{eq:foliatedchart}
   \dot x'(t, \mu) = a'(\mu)x'(t, \mu) + b'(\mu) u'(t), \quad \mbox{for all } \mu \in \bar V_F.  
   \end{equation}
  With the preliminaries above, we now prove Proposition~\ref{prop:jacobian}:
   
   {\em Proof of Proposition~\ref{prop:jacobian}.}
   By Lemma~\ref{lem:embedding}, it suffices to show that 
   if $k_J < d$, then system~\eqref{eq:foliatedchart} is not controllable. 
   Since $k_J < d$, the dimension of each slice $S_{\mu_2}$ defined in Lemma~\ref{lem:ranktheorem} is positive.     
     Let $b'_i$ be the~$i$th entry of the row vector~$b'$ and $b'_i |_{S_{\mu_2}}$ be the restriction of $b'_i$ to the slice~$S_{\mu_2}$. 
     Denote by $\mathcal{B}_{\mu_2}$ the finite-dimensional subspace of $\mathrm{L}^2(S_{\mu_2},\C)$ spanned by $b'_1 |_{S_{\mu_2}},\ldots, b'_m |_{S_{\mu_2}}$.  
   Note that $\dim\mathcal{B}_{\mu_2}$ is locally nondecreasing as a function of $\mu_2\in [-\epsilon_2, \epsilon_2]^{k_J}$. We can assume, without loss of generality, that the maximum value of $\dim\mathcal{B}_{\mu_2}$ is achieved at $\mu_2 = 0$, and let $m':= \dim\mathcal{B}_{0}$.  
    Furthermore, by decreasing the value of $\epsilon_2$, if necessary, we can assume that the first $m'$ scalar functions $b'_1 |_{\mu_2},\ldots, b'_{m'} |_{\mu_2}$ are linearly independent for all $\mu_2\in [-\epsilon_2, \epsilon_2]^{k_J}$.

   Denote by $\operatorname{P}_{\mu_2}$ the orthogonal projection of the Hilbert space ${\rm L}^2(S_{\mu_2}, \C)$ onto $\mathcal{B}^\perp_{\mu_2}$, the subspace 
    perpendicular to $\mathcal{B}_{\mu_2}$. The operator can be computed explicitly: For any $h\in {\rm L}^2(S_{\mu_2}, \C)$, we have that
   $$
    \operatorname{P}_{\mu_2}(h) = h - \sum^{m'}_{i = 1} c_i(h)  b'_i |_{S_{\mu_2}},
   $$  
   where the coefficients $c_i(h)\in \C$ are given by  
    $$
    \begin{bmatrix}
    	c_1(h) \\
    	\vdots \\
    	c_{m'}(h)
    \end{bmatrix}
	 := 
    \begin{bmatrix}
    	 \langle b'_1 |_{S_{\mu_2}},  b'_1 |_{S_{\mu_2}}  \rangle & \cdots &   \langle b'_1 |_{S_{\mu_2}},  b'_{m'} |_{S_{\mu_2}} \rangle \\
    	\vdots & \ddots & \vdots \\
    	\langle b'_{m'} |_{S_{\mu_2}},  b'_1 |_{S_{\mu_2}} \rangle & \cdots &  \langle b'_{m'} |_{S_{\mu_2}},  b'_{m'} |_{S_{\mu_2}} \rangle
    \end{bmatrix}^{-1}
    \begin{bmatrix}
    	\langle b'_1 |_{S_{\mu_2}},  h \rangle \\
    	\vdots \\
    	\langle b'_{m'} |_{S_{\mu_2}},  h \rangle
    \end{bmatrix}.
   $$
   Note that the square matrix in the above expression is Hermitian and, moreover, positive definite because $b'_1 |_{S_{\mu_2}}, \ldots, b'_{m'} |_{S_{\mu_2}}$ are linearly independent. Also, note that the coefficients $c_i(h)$ are linear in $h$ and continuous in $\mu_2$. Thus, $\operatorname{P}_{\mu_2}(h)$ is continuous in both~$h$ and~$\mu_2$.  
   
   Next, let $f: \bar V_F\to \C$ be a continuous function such that there exists at least one point $\mu_2\in [-\epsilon_2,\epsilon_2]^{k_J}$ with $f |_{S_{\mu_2}}\notin \mathcal{B}_{\mu_2}$.    
   Further, define $g: \bar V_F \to \C$ as follows:
   $$
   g(\mu_1, \mu_2):= \operatorname{P}_{\mu_2}(f |_{S_{\mu_2}}) (\mu_1,\mu_2), \quad\mbox{for all } (\mu_1, \mu_2)\in \bar V_F,
   $$
   i.e., each $g |_{S_{\mu_2}}$ is the orthogonal projection of $f |_{S_{\mu_2}}$ to $\mathcal{B}^\perp_{\mu_2}$. 
   By construction, $g$ is continuous and nonzero.

   We show below that $g$ is orthogonal to the controllable subspace $\mathcal{L}(a',b')$ associated with system~\eqref{eq:foliatedchart}. 
   By Lemma~\ref{lem:ranktheorem}, 
   the function $a'$ takes a constant value on each slice $S_{\mu_2}$; we denote the value by 
   $a'_{\mu_2}$.   
   Then, for any $i = 1,\ldots, m$ and for any $k\ge 0$,   
   $$
   \langle g, a'^k b'_i \rangle_{\bar V_F} = 
   \int_{[-\epsilon_2, \epsilon_2]^{k_J}} a'^k_{\mu_2} \langle  g |_{S_{\mu_2}},  b'_i |_{S_{\mu_2}}\rangle_{S_{\mu_2}} \mathrm{d}\mu_2 = 0,
   $$
   where the last equality holds because, by construction, $$\langle  g |_{S_{\mu_2}},  b'_i |_{S_{\mu_2}}\rangle_{S_{\mu_2}} = 0, \quad \mbox{for all } \mu_2\in [-\epsilon_2, \epsilon_2]^{k_J}.$$ 
   We thus conclude that $g$ is orthogonal to $\mathcal{L}(a',b')$, so $\mathcal{L}(a',b')$ cannot be the entire ${\rm L}^2(\bar V_F,\C)$. By Lemma~\ref{lem:controllablesubspace}, system~\eqref{eq:foliatedchart} is not controllable. 
    \endproof

 \subsection{Translation to the normal form}\label{ssec:translation}
We establish below Theorem~\ref{thm:main}.  
By the arguments in the previous subsections, we only need to consider scalar, complex linear ensemble systems over closed, two-dimensional disks~$\Sigma$: 
\begin{equation}\label{eq:reproducescalar}
\dot x(t, \sigma) = a(\sigma) x(t, \sigma) + b(\sigma)u(t), \quad \mbox{for all } \sigma \in \Sigma. 
\end{equation}  
By Proposition~\ref{prop:jacobian}, we can further assume that there is a point $\sigma_J$ in the interior of $\Sigma$ such that $\operatorname{rank} J(\sigma_J)= 2$ because otherwise, system~\eqref{eq:reproducescalar} is not controllable.  This excludes, for example, the case where $a$ is real-valued.

Again, we identify $\R^2$ with $\C$ and treat $\Sigma$ as a subset of $\C$.  
Since the Jacobian matrix $J(\sigma_J)$  has full rank, it follows from the inverse function theorem that there is an open neighborhood $U_J$ of $\sigma_J$ in the interior of $\Sigma$ such that $a: U_J\to \C$ is a real-analytic diffeomorphism between $U_J$ and its image, which we denote by $V_J$.
Let $a_J:= a(\sigma_J)$ and $D_{a_J}[R]$ be the closed disk of radius~$R > 0$ in~$\C$ centered at $a_J$. We let $R$ be sufficiently small such that $D_{a_J}[R]$ is contained in the open set $V_J$. 

Let $R$ be given as above and $D_0[R]$ be the closed disk of radius $R$ centered at~$0$.  
Now, consider the following normal form (for clarity of presentation, we use letter $\mu$ to denote a point in $D_0[R]$): 
\begin{equation}\label{eq:finalone}
 	\dot x'(t,\mu) = \mu x'(t,\mu) + b(a^{-1}(\mu + a_J)) u'(t), \quad \mbox{for all } \mu\in D_0[R]. 
\end{equation}
We establish the following result: 
 
 \begin{proposition}\label{prop:finalpiece}
 	If system~\eqref{eq:finalone} is not controllable, then neither is system~\eqref{eq:reproducescalar}.  
 \end{proposition}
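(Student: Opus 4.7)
The plan is to exhibit system~\eqref{eq:finalone} as essentially the pullback of system~\eqref{eq:reproducescalar} by a $\mathrm{C}^1$-embedding, and then invoke Lemma~\ref{lem:embedding}. Concretely, I would define
\[
\rho: D_0[R] \to \Sigma, \qquad \rho(\mu) := a^{-1}(\mu + a_J).
\]
Since $R$ was chosen small enough that $D_{a_J}[R] \subset V_J$, and since $a^{-1}: V_J \to U_J$ is a real-analytic diffeomorphism, $\rho$ is a real-analytic (hence $\mathrm{C}^1$) embedding of $D_0[R]$ into the interior of $\Sigma$. Form the pullback of~\eqref{eq:reproducescalar} by $\rho$, which reads
\[
\dot x'(t,\mu) = (\mu + a_J)\, x'(t,\mu) + b\bigl(a^{-1}(\mu+a_J)\bigr) u'(t), \quad \forall \mu \in D_0[R].
\]
Denote its coefficients by $\tilde a(\mu) := \mu + a_J$ and $\tilde b(\mu) := b(a^{-1}(\mu+a_J))$. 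By Lemma~\ref{lem:embedding}, if this pullback is not controllable, then neither is~\eqref{eq:reproducescalar}.

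The remaining task is to compare the controllable subspace of the pullback with that of the normal form~\eqref{eq:finalone}, which has the same $\tilde b$ but with $\tilde a$ replaced by $\id$. The key observation is that the $\mathrm{L}^2$-closure of the span of $\{\tilde a^k \tilde b : k \ge 0\} = \{(\mu+a_J)^k \tilde b(\mu) : k\ge 0\}$ agrees with the $\mathrm{L}^2$-closure of the span of $\{\mu^k \tilde b(\mu) : k \ge 0\}$. Indeed, the binomial expansions
\[
(\mu+a_J)^k = \sum_{j=0}^{k} \binom{k}{j} a_J^{k-j} \mu^{j}, \qquad \mu^k = \sum_{j=0}^{k} \binom{k}{j} (-a_J)^{k-j}(\mu + a_J)^{j},
\]
show that each family is a finite $\C$-linear combination of the other, so the two $\C$-spans (before closure) coincide. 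Hence $\mathcal{L}(\tilde a, \tilde b) = \mathcal{L}(\id_{D_0[R]}, \tilde b)$.

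Combining these two observations: if the normal form~\eqref{eq:finalone} is not $\mathrm{L}^2$-controllable, then by Lemma~\ref{lem:controllablesubspace} its controllable subspace $\mathcal{L}(\id_{D_0[R]}, \tilde b)$ is proper in $\mathrm{L}^2(D_0[R], \C)$; therefore $\mathcal{L}(\tilde a, \tilde b)$ is proper as well, so the pullback is not $\mathrm{L}^2$-controllable; finally, Lemma~\ref{lem:embedding} transports non-controllability back to system~\eqref{eq:reproducescalar}. There is no real obstacle here once $\rho$ has been identified as an embedding — the only point to verify carefully is that $\rho(D_0[R]) \subset \Sigma$ lies where $a$ is a diffeomorphism, which was arranged at the start of Section~\S\ref{ssec:translation}; the equivalence of controllable subspaces under a translation of the $A$-scalar is immediate from the binomial identity.
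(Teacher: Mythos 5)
Your proposal is correct and follows essentially the same route as the paper: the paper likewise splits the argument into (i) the observation that replacing $\id$ by $\id + a_J$ leaves the controllable subspace unchanged (your binomial identity) and (ii) an application of Lemma~\ref{lem:embedding} to the pullback of~\eqref{eq:reproducescalar} by the embedding $a^{-1}$, merely writing the translation $\mu' = \mu + a_J$ as a separate change of variables rather than absorbing it into the embedding $\rho$ as you do. No gaps.
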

 
 \begin{proof}
 For convenience, we let $b'(\mu):= b(a^{-1}(\mu + a_J))$. 
 	To establish the result, we first consider the following ensemble system as a variation of~\eqref{eq:finalone}:
 	\begin{equation}\label{eq:variationoffinalone}
 	\dot x''(t, \mu) = (\mu + a_J) x''(t, \mu) + b'(\mu) u''(t), \quad \mbox{for all } \mu \in D_0[R], 
 	\end{equation}
 	where we have replaced the ``$A$-matrix'', which is the identity function~$\omega$ in~\eqref{eq:finalone}, with the function $(\omega + a_J\mathbf{1})$ in~\eqref{eq:variationoffinalone}. Note that system~\eqref{eq:finalone} is controllable if and only if system~\eqref{eq:variationoffinalone} is. This holds because the controllable subspaces associated with the two systems are the same. Indeed, for any $k\ge 0$,  $(\omega + a_J\mathbf{1})^k$ is a linear combination of $\omega^\ell$, for $0\leq \ell \le k$. Conversely, each $\omega^k$ can be expressed as a linear combination of $(\omega + a_J\mathbf{1})^\ell$, for $0\leq \ell\leq k$. It follows that $\mathcal{L}(\omega, b') = \mathcal{L}(\omega + a_J\mathbf{1}, b')$.   
 	
 	It now suffices to show that if system~\eqref{eq:variationoffinalone} is not controllable, then neither is system~\eqref{eq:reproducescalar}. We let $\mu':= \mu + a_J$ and re-write system~\eqref{eq:variationoffinalone} as follows: 
 		\begin{equation}\label{eq:aftertranslation}
 		\dot x''(t, \mu')  =  \mu' x''(t, \mu') + b'(\mu') u''(t), \quad \mbox{for all } \mu' \in D_{a_J}[R]. 
 		\end{equation}  
 	It turns out that system~\eqref{eq:aftertranslation} is the pullback of system~\eqref{eq:reproducescalar} via the embedding $a^{-1}: D_{a_J}[R]\to \Sigma$. For this, we recall that $D_{a_J}[R]$ is contained in $V_J$, $V_J$ is the image of $U_J$ under~$a$, and $U_J$ is in the interior of $\Sigma$.   
 	Thus, by Lemma~\ref{lem:embedding}, if system~\eqref{eq:aftertranslation} is not controllable, then neither is system~\eqref{eq:reproducescalar}. 
 \end{proof}
 
A proof of Theorem~\ref{thm:main} is now at hand:
 
{\em Proof of Theorem~\ref{thm:main}.}
From Theorem~\ref{thm:specialcomplex}, normal forms are not $\mathrm{L}^2$-controllable. Thus, by Proposition~\ref{prop:finalpiece}, system~\eqref{eq:reproducescalar} is not $\mathrm{L}^2$-controllable.  The arguments in Subsections~\S\ref{ssec:reductiononstatespace} and~\S\ref{ssec:reductiononparaspace} then imply that system~\eqref{eq:modelrep} is not $\mathrm{L}^2$-controllable. Combining this with the arguments in Section~\S\ref{sec:preliminaryresult}, we complete the proof. 
\endproof

\section{Conclusions}
	 We have shown that for a linear ensemble system $(A,B)$, if its parameterization space $\Sigma$ contains an open set in $\R^d$, for $d\geq 2$, and if $A:\Sigma\to \F^{n\times n}$ and $B:\Sigma\to\F^{n\times m}$, with $\F$ being either $\R$ or $\C$,  are real-analytic at a point in $U$, then the linear ensemble system cannot be $\mathrm{L}^p$-controllable, for $2\le p \le \infty$. This negative result has implications for approximation theory and operator theory, as described in Theorem~\ref{thm:approximation} and Corollary~\ref{cor:cyclicoperator}.  
	Finally, we note that the negative result applies only to linear ensemble systems. There exist uniformly controllable ensembles of control-affine systems~\cite{chen2019structure}, with real-analytic vector fields, over multi-dimensional parameterization spaces.

\bibliographystyle{IEEEtran}
\bibliography{unctrl.bib}

\end{document}